\theoremstyle{definition}
	\newtheorem{definition}{Definition}[section]
	\newtheorem*{remark}{Remark}
	\newtheorem{lemma}[definition]{Lemma}
	\newtheorem{proposition}[definition]{Proposition}
	\newtheorem{theorem}[definition]{Theorem}
\theoremstyle{remark}
\title{Multi-breather solutions to the Sasa-Satsuma equation}
\author{Chengfa Wu$^1$,  Bo Wei$^1$,  Changyan Shi$^1$, Bao-Feng Feng$^{*2}$
\\
\\
 $^1$Institute for Advanced Study, Shenzhen University\\ Shenzhen 518060, People's Republic of China \\ $^2$School of Mathematical and Statistical Sciences, \\ The University of
Texas Rio Grande Valley Edinburg,
\\ Edinburg, TX 78541-2999, USA
}
\date{}
\begin{document}

\maketitle

\begin{figure}[b]
\rule[-2.5truemm]{5cm}{0.1truemm}\\[2mm]
{\footnotesize  *Corresponding author. Email address: baofeng.feng@utrgv.edu.}  

\end{figure}



\begin{abstract}
General breather solution to the Sasa-Satsuma (SS) equation is systematically investigated in this paper.
We firstly transform the SS equation into a set of three Hirota bilinear equations under proper plane wave background.
Starting from a specially arranged tau-function of the Kadomtsev-Petviashvili hierarchy and a set of eleven bilinear equations satisfied, we implement a series steps of reduction procedure, i.e., C-type reduction, dimension reduction and complex conjugate reduction, and reduce these eleven equations to three bilinear equations for the SS equation. Meanwhile, general breather  solution to the SS equation is found in determinant of even order. The one- and two-breather solutions are calculated and analyzed in details.

\end{abstract}

\maketitle

\section{Introduction}
Breathers are ubiquitous phenomena in many physical systems either in continuous or discrete ones. It is a particular type of nonlinear wave whose energy is localized in space but oscillates over time, or vice versa. The exactly solvable sine-Gordon equation  \cite{Ablowitz} and the focusing nonlinear Schr\"odinger equation \cite{Akh1} are examples of one-dimensional partial differential equations that possess breather solutions \cite{Akh2}.

The so-called intrinsic localized modes (ILMs) or the discrete breathers (DBs) in Fermi-Pasta-Ulam (FPU) lattices were reported in the late 1980s \cite{Sievers88,Page90}.
They have been
recently observed experimentally in various physical contexts such
as coupled optical waveguides \cite{EisenbergPRL98,Sukorukov03}, Josephson
junction ladders \cite{JosephonEx,Binder00}, antiferromagnet crystals
\cite{SieversPRL99}, and  micromechanical oscillator arrays \cite{SieversPRL03}.

Breathers have met with success in understanding the final stage of a certain nonlinear process that is initiated from modulation instability (MI, also known as the Benjamin-Feir instability) \cite{Yuen1980Lake}. It is well-known that MI is one of the most ubiquitous phenomena in nature and commonly appears in many physical contexts such as water waves, plasma waves and electromagnetic transmission lines \cite{Zakharov2009Ostrovsky}. Whereas recent theoretical developments indicated that the presence of baseband MI supports the generation of rogue waves (RW) \cite{Dysthe2008KrogstadMuller}, breathers also appear to be a significant strategy in deriving RW  solutions of many integrable equations \cite{Chowdury2017KrolikowskiAkhmediev,Feng2021LingZhu}.

The nonlinear Schr\"{o}dinger equation (NLSE)
 \begin{equation}
	\begin{aligned}
    &\mathbf{i} \frac{\partial q}{\partial T}+\frac{1}{2} \frac{\partial^{2} q}{\partial X^{2}} \pm |q|^{2} q=0\\
    \end{aligned}\label{Sasa1990equation4}
\end{equation}
describes the evolution of weakly nonlinear and quasi-monochromatic waves in dispersive media \cite{Benney1967Newell}. This equation has found applications in  numerous areas of physics, ranging from nonlinear optical fibers \cite{Agrawal1995}, plasma physics \cite{Zakharov1972} to  Bose-Einstein condensates  \cite{Dalfovo1999GiorginiStringari}. From the mathematical point of view, the NLSE is considered to be a fundamental model in investigating breather and RW  solutions  \cite{Akhmediev1986Korneev,Ohta2012Yang,Chen2018Pelinovsky}. In particular,
the Akhmediev breather (AB) \cite{Akhmediev1986Korneev} and Kuznetsov-Ma soliton (KM) \cite{Kuznetsov1977,Ma1979}, where AB (KM) is periodic in space (time) and localized in time (space), have captured wide attention. Remarkably, when we take the large-period limits, both of them degenerate to the Peregrine soliton \cite{Peregrine1983}, which is localized both in time and space and turns into a prototype of RWs. It turns out that this idea has been widely adopted in constructing RW solutions of many other integrable equations and their multi-component generalizations \cite{Feng2021LingZhu,Zhang2018Yan}.

The NLSE is one of the most fundamental integrable equations in the sense that it only incorporates the lowest-order dispersion and the lowest-order nonlinear term. However, higher-order terms are indispensable in more complicated circumstances, such as modeling the ultrashort pulses generated due to the MI \cite{Agrawal2011} and examing the one-dimensional Heisenberg spin chain \cite{Porsezian1992DanielLakshmanan}. As such, a number of integrable extensions of the NLSE have been proposed, including the higher-order NLSE \cite{Agrawal1995}, the Sasa-Satsuma equation \cite{Sasa1991Satsuma,Xu2013Fan} and the Kundu-NLSE \cite{Kundu1984}, to name a few examples. Therefore it is natural to expand the investigations on NLSE to these integrable models. While compared with the NLSE it is more challenging to obtain soliton, breather, or RW solutions of these equations \cite{Gedalin1997ScottBand,Mihalache1993TornerMoldoveanuPanoiuTruta,Gilson2003HietarintaNimmoOhta,Shi2019LiWu}, the occurrence of higher-order terms may also induce various new features to the solutions and enrich the solution dynamics \cite{Chowdury2017KrolikowskiAkhmediev}.

As mentioned above, the Sasa-Satsuma equation (SSE) is a nontrivial integrable extension of the NLSE and can be written in the form \cite{Sasa1991Satsuma}
\begin{equation} \label{Sasa1990equation4}
	\begin{aligned}
    &\mathbf{i} \frac{\partial q}{\partial T}+\frac{1}{2} \frac{\partial^{2} q}{\partial X^{2}}+|q|^{2} q+\mathbf{i} \varepsilon\left\{\frac{\partial^{3} q}{\partial X^{3}}+6|q|^{2} \frac{\partial q}{\partial X}+3 q \frac{\partial|q|^{2}}{\partial X}\right\}=0,\\
    \end{aligned}
\end{equation}
where $q$ corresponds to the complex envelope of the wave
field and the real constant $\varepsilon$  scales the integrable perturbations of the NLSE. For $\varepsilon=0$, the SSE reduces to the NLSE. As an extension of the NLSE,  the SSE consists of terms describing the third-order dispersion, the self-steepening and the self-frequency shift that are commonly involved in nonlinear optics \cite{Mihalache1997TrutaCrasovan,Solli2007RopersKoonathJalali}. For the convenience of analyzing the SSE, according to the work of Sasa and Satsuma \cite{Sasa1991Satsuma},  one can introduce the transformation
\begin{equation}
\begin{aligned}
&u(x, t)=q(X, T) \exp \left\{-\frac{\mathbf{i}}{6 \varepsilon}\left(X-\frac{T}{18 \varepsilon}\right)\right\}, \\
\end{aligned}
\end{equation}
where $t=T$ and $x=X-T/(12 \varepsilon)$, then the equation   \eqref{Sasa1990equation4} is transformed into
\begin{equation}
u_t + \varepsilon (u_{xxx} + 6 |u|^2 u_x + 3 u (|u|^2)_x) = 0.\label{Sasa1990equation6}
\end{equation}
On account of its integrability and physical implications, the SSE has attracted much attention since it was discovered. For instance, the double hump soliton solution of the SSE was obtained by Mihalache et al. \cite{Mihalache1993TornerMoldoveanuPanoiuTruta} while its multisoliton solutions  have been constructed in the Refs.\cite{Gilson2003HietarintaNimmoOhta,Ohta2010} by the Kadomtsev-Petviashvili (KP) hierarchy reduction method. In addition to the soliton solutions, RW solutions \cite{Chen2013,Akhmedieva2015Soto-CrespoDevineHoffmannc,Mu2016Qin,Ling2016} of the SSE have also been found via the method of Darboux transformation \cite{Zhang2018Yan}, and in contrast to the NLSE, several intriguing solution structures were reported like the so-called twisted RW pair \cite{Chen2013}. Beyond that, the long-time asymptotic behaviour of the SSE with decaying initial data was analyzed in \cite{Liu2018GengXue} by formulating the Riemann-Hilbert problem.

Despite extensive investigations on the SSE, its breather solutions have not been systematically examined, to the best of our knowledge. Consequently, the main objective of this paper is to derive multi-breather solutions to the Sasa-Satsuma equation
\begin{equation}\label{SS equation}
u_{t}=u_{x x x}-6 c|u|^{2} u_{x}-3 c u\left(|u|^{2}\right)_{x},
\end{equation}
where $c$ is a real constant. The rest of this paper is organized as follows. In Section \ref{Solutions of the Sasa-Satsuma equation}, general multi-breather solutions of equation \eqref{SS equation} are presented in Theorem \ref{thm}. The detailed derivations of these solutions are provided in Section \ref{Derivation of Solutions of the Sasa-Satsuma equation}. In this process, we firstly transform the equation \eqref{SS equation} into bilinear forms. Then multi-breather solutions of equation \eqref{SS equation} can be obtained by relating the bilinear forms of \eqref{SS equation} with a set of eleven bilinear equations in the KP hierarchy. Although the idea seems to be straightforward, the intermediate computations are extremely complicated due to the complexity of the SSE and multiple corresponding bilinear equations from the KP hierarchy. In addition to the dimension reduction and the complex conjugate reduction, which are the common obstructions in applying the KP hierarchy reduction method \cite{Ohta2012Yang,Chen2018FengMarunoOhta,Rao2017PorsezianHeKanna,Feng2018LuoAblowitzMusslimani,Chen2019ChenFengMaruno,Li2020FuWu}, a new obstacle is to tackle the symmetry reduction \eqref{symmetry2}. As pointed out in \cite{Gilson2003HietarintaNimmoOhta}, when applying the direct method \cite{Hirota2004} to find soliton solutions, one only needs to truncate at power two of the formal expansion for NLSE whereas one has to go to  power four for SSE, thereby resulting in more sophisticated analysis. It turns out that this also appears in our consideration, namely the structure of breather solutions of SSE is more intricate than that of NLSE (see Theorem \ref{thm}).  In Section \ref{Dynamics of breather solutions}, the solution dynamics are discussed in detail. Six types of first-order breathers were found totally and various configurations of second- and third-order breathers have been illustrated.  The main results of this paper are summarized in Section 5.

\section{Multi-breather solutions to the Sasa-Satsuma  equation}\label{Solutions of the Sasa-Satsuma equation}
In this section, we present the multi-breather solutions to the Sasa-Satsuma equation \eqref{SS equation}.

\begin{theorem}\label{thm}
The Sasa-Satsuma  equation \eqref{SS equation}
admits the multi-breather solution
\begin{equation}\label{SS solution}
 u=\frac{g}{f} e^{\mathrm{i}\left(\kappa(x-6 c t)-\kappa^{3} t\right)}
\end{equation}
where $\kappa$ is real,
$$f(x,t)=\tau_{0} (x-6 c t,t), \quad g(x,t)=\tau_{1} (x-6 c t,t)$$ and $\tau_{k} \, (k=0,1)$ is defined as
\begin{equation}\label{SS solution_tau fucntion}
  \tau_{k}=\left| \sum_{m,n=1}^{2} \frac{1}{p_{i m}+p_{j n}}\left(-\frac{p_{i m}-a}{p_{j n}+a}\right)^{k} e^{\xi_{i m}+\xi_{j n}}\right|_{2 N \times 2 N}.
\end{equation}
Here, $a = \mathbf{i} \kappa$ is purely imaginary, $ \xi_{im}=p_{im} x +p_{im}^{3} t+ \xi_{im,0}$, $N$ is a positive integer and the parameters $  \xi_{im,0}, p_{im} \, (i=1, \cdots, 2N, m = 1,2)$  satisfy the constraints
 \begin{eqnarray} \label{parameter constraint1}
   \left(p_{i 1}^{2}+\kappa^{2}\right)\left(p_{i 2}^{2}+\kappa^{2}\right)&=&  -2 c\left(p_{i 1} p_{i 2}-\kappa^{2}\right) \end{eqnarray}
 and
 \begin{equation}\label{parameter constraint2-1}
  p_{N+l, m}=p_{l m}^{*}, \quad   \xi_{N+l,m, 0} =  \xi_{lm, 0}^*, \quad l=1, \cdots, N,
 \end{equation}
where $*$ denotes complex conjugation.
\end{theorem}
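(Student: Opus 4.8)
The plan is to verify that the determinant formula $(\ref{SS solution_tau fucntion})$ yields a genuine solution by embedding the problem into the Kadomtsev--Petviashvili (KP) hierarchy and carrying out the three reductions announced in the abstract. First I would write down the Sasa--Satsuma equation $(\ref{SS equation})$ in bilinear form. Substituting the ansatz $(\ref{SS solution})$ with $u = (g/f)\,e^{\mathrm{i}(\kappa(x-6ct)-\kappa^3 t)}$ and introducing the travelling coordinate $\xi = x - 6ct$, one expects the single scalar equation to split into a system of three Hirota bilinear equations relating $f$, $g$, and possibly an auxiliary function (say $s$), exactly as the introduction promises (``a set of three Hirota bilinear equations''). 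I would record these bilinear equations explicitly in terms of the operators $D_\xi$ and $D_t$, since every subsequent step consists of checking that the determinantal $\tau_k$ satisfy them.

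Next I would recall the Gram-type determinant solutions of the KP hierarchy. The summand $\frac{1}{p_{im}+p_{jn}}\bigl(-\frac{p_{im}-a}{p_{jn}+a}\bigr)^k e^{\xi_{im}+\xi_{jn}}$ is a $2\times2$-blocked matrix entry indexed by $(i,m)$ and $(j,n)$; the inner sum over $m,n\in\{1,2\}$ is the signature of a \emph{C-type reduction}, which collapses a higher KP $\tau$-function onto the two-index structure. The factor $\bigl(-\frac{p_{im}-a}{p_{jn}+a}\bigr)^k$ is the standard shift that moves $\tau_0\mapsto\tau_1$ under translation by one KP time, so the bilinear identities among $\tau_0$ and $\tau_1$ follow from the eleven KP bilinear equations cited in the excerpt. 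Starting from those known identities, I would perform the \emph{dimension reduction}: impose a constraint forcing the auxiliary KP time dependence to appear only through $\xi$ and $t$. This is precisely where the quadratic relation
\begin{equation}
\left(p_{i1}^2+\kappa^2\right)\left(p_{i2}^2+\kappa^2\right) = -2c\left(p_{i1}p_{i2}-\kappa^2\right)
\label{dimred}
\end{equation}
must arise: it is the characteristic/dispersion constraint selecting, for each index $i$, an admissible pair $(p_{i1},p_{i2})$ so that the extra KP flow acts as a differential operator annihilating $\tau_k$ up to a multiplicative factor, thereby eliminating one independent variable.

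Then I would impose the \emph{complex conjugate reduction}. Setting $a=\mathrm{i}\kappa$ purely imaginary and pairing the parameters via $p_{N+l,m}=p_{lm}^*$ and $\xi_{N+l,m,0}=\xi_{lm,0}^*$ as in $(\ref{parameter constraint2-1})$ guarantees that the $2N\times2N$ determinant is real (or has the correct reality/hermiticity), so that $f=\tau_0$ is real and positive and $u=g/f$ with $g=\tau_1$ produces a bona fide complex breather on the plane-wave background. The key computation is to check that under $p\mapsto p^*$ the matrix entry for $k=0$ maps to its conjugate transpose partner, making $\tau_0$ real, while the shift factor for $k=1$ transforms consistently so that the bilinear equations survive conjugation. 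Finally I would confirm that the travelling substitution $\xi = x-6ct$, together with the phase $e^{\mathrm{i}(\kappa(x-6ct)-\kappa^3 t)}$, converts the reduced bilinear system back into $(\ref{SS equation})$, completing the verification.

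I expect the main obstacle to be the dimension reduction producing the constraint $(\ref{dimred})$, together with the associated symmetry reduction $(\ref{symmetry2})$ flagged in the introduction as the genuinely new difficulty. Unlike the NLSE case, where truncating the formal expansion at power two suffices, the Sasa--Satsuma structure forces the analysis to power four (as noted after \cite{Gilson2003HietarintaNimmoOhta}), so the reduced linear operator in the extra KP time does not annihilate $\tau_k$ outright but only up to terms that must be shown to cancel precisely when the quadratic relation between $p_{i1}$ and $p_{i2}$ holds. Establishing that this quartic-order cancellation is equivalent to $(\ref{dimred})$ — and that the C-type and conjugate reductions are compatible with it — is the technical heart of the argument; the remaining bilinear verifications are then routine applications of the KP hierarchy identities assumed from the cited references.
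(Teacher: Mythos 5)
Your proposal is correct and takes essentially the same route as the paper: bilinearization of \eqref{SS equation} into three Hirota equations with an auxiliary function, the Gram-type determinant lemma for the KP hierarchy, and then the C-type, dimension, and complex-conjugate reductions, with the quadratic constraint \eqref{parameter constraint1} arising exactly where you place it (the dimension reduction, which turns $\partial_r+\partial_s$ into $\tfrac{1}{c}\partial_x$ on the tau functions) and the conjugate pairing \eqref{parameter constraint2-1} with $a=\mathbf{i}\kappa$ delivering the reality of $f$ and $g^*=\tau_{01}$. One terminological slip worth noting: the sum over $m,n$ in the matrix entries is not itself the C-type reduction — in the paper that reduction is the identification $q_{jn}=p_{jn}$, $b=-a$, $\eta_{jn,0}=\xi_{jn,0}$, which produces the symmetry $\tau_{kl}(x,y,t,r,s)=\tau_{-l,-k}(x,-y,t,s,r)$ (i.e.\ \eqref{symmetry2} after the dimension reduction), and it is this symmetry, acting on the two-index family $\widetilde{\tau}_{00},\widetilde{\tau}_{10},\widetilde{\tau}_{01},\widetilde{\tau}_{11}$, that closes the reduced KP equations into the three Sasa-Satsuma bilinear equations involving $f$, $g$, $g^*$ and the auxiliary function $q=\widetilde{\tau}_{11}$.
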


\begin{remark} \label{number of free parameters}
We note that the parameter relations \eqref{parameter constraint1} and \eqref{parameter constraint2-1} presented in Theorem \ref{thm} give rise to $6N+1$ free real parameters which include $\kappa$, the real parts and imaginary parts of $p_{i1} $ and $\xi_{im, 0}, i=1, \cdots, N, m =1,2$.
\end{remark}

\begin{remark} \label{p_i2}
For $c = -4 \kappa ^2$, we can solve the  equation \eqref{parameter constraint1}  for  $p_{i2}$  that is given by
\begin{eqnarray*}
 p_{i2} =  \frac{4 \kappa ^2 p_{i1} \pm \mathbf{i} \kappa  \left(p_{i1}^2-3 \kappa ^2\right)}{\kappa ^2+p_{i1}^2}.
\end{eqnarray*}
When $c \not= -4 \kappa ^2$, the expression of $p_{i2}$ is more complicated.
If we set $p_{i1} = P_{\text{R}}+ \mathbf{i} P_{\text{I}}$, where $P_{R}$ and  $P_{I}$ represent the real and imaginary parts of $p_{i1} $ respectively, then
$p_{i2}$ can be expressed as
\begin{eqnarray*}
p_{i2}  &=& \frac{G+\mathbf{i} H}{K }
\end{eqnarray*}
where
\begin{eqnarray*}
  G &=& - P_{I}^2 \left(\pm \alpha +2 c P_R\right)+\left(\kappa ^2+P_R^2\right) \left(\pm \alpha -2 c P_R\right) \\
  H &=& \left(\kappa ^2-P_{I}^2\right) \left(\pm \beta -2 c P_{I}\right)+P_R^2 \left(\pm \beta +2 c P_{I}\right) \mp 2 \alpha  P_{I} P_R
  \\
  K &=& 2 \left[2 P_{I}^2 \left(P_R^2-\kappa ^2\right)+P_{I}^4+\left(\kappa ^2+P_R^2\right){}^2\right]
\end{eqnarray*}
with
\begin{eqnarray*}
  \alpha &=& \sqrt[4]{X^2+Y^2} \cos \theta \\
  \beta &=& \sqrt[4]{X^2+Y^2} \sin \theta\\
 X &=& -4 \left[P_{I}^2 \left(c^2+2 c \kappa ^2+\kappa ^2 P_{I}^2-2 \kappa ^4\right)-P_R^2 \left(c^2+2 c \kappa ^2+6 \kappa ^2 P_{I}^2-2 \kappa ^4\right)-2 c \kappa ^4+\kappa ^6+\kappa ^2 P_R^4\right] \\
  Y &=& 8 P_{I} P_R \left[c^2+2 c \kappa ^2+2 \kappa ^2 \left(P_{I}^2-P_R^2\right)-2 \kappa ^4\right]
  \\
  \theta &=&   \arctan (Y/X)/2.
\end{eqnarray*}

\end{remark}

\section{Derivation of the multi-breather solutions}\label{Derivation of Solutions of the Sasa-Satsuma equation}

This section is devoted to the construction of multi-breather solutions to the Sasa-Satsuma equation \eqref{SS equation}. It consists of two main steps. First, we transform the Sasa-Satsuma equation \eqref{SS equation} into bilinear forms. Then multi-breather solutions are derived by showing that such bilinear equations can be obtained from reductions of the KP hierarchy.
\subsection{Bilinear forms of the Sasa-Satsuma equation}
The bilinearization of the Sasa-Satsuma equation \eqref{SS equation} is established by the proposition below.

\begin{proposition} The Sasa-Satsuma equation
\begin{equation*}
u_{t}=u_{x x x}-6 c|u|^{2} u_{x}-3 c u\left(|u|^{2}\right)_{x}
\end{equation*}
can be transformed into the system of bilinear equations
\begin{equation} \label{bilinear form-SS}
\begin{aligned}
    &\left(D_{x}^{2}-4c\right) f \cdot f=-4 c g g^{*}\\
    &\left(D_{x}^{3}-D_{t}+3 \mathrm{i} \kappa D_{x}^{2}-3\left(\kappa^{2}+ 4 c \right) D_{x} - 6 \mathrm{i} \kappa c \right) g \cdot f+6 \mathrm{i} \kappa c q g=0\\
    &\left(D_{x}+2 \mathrm{i} \kappa\right) g \cdot g^{*}=2 \mathrm{i} \kappa q f\\
    \end{aligned}
\end{equation}
by the variable transformation
\begin{equation} \label{transformation1}
    u=\frac{g}{f} e^{\mathrm{i}\left(\kappa(x-6 c t)-\kappa^{3} t\right)},
\end{equation}
where $\kappa$ is real, $f$ is a real-valued function, $g$ is a complex-valued function, $q$ is an  auxiliary  function  and $D$ is the Hirota's bilinear operator \cite{Hirota2004} defined by
\begin{eqnarray*}\label{doperator}
D_x^mD_t^nf\cdot g=\left.\left(\frac{\partial}{\partial x}-\frac{\partial}{\partial {x'}}\right)^m\left(\frac{\partial}{\partial t}-\frac{\partial}{\partial{t'}}\right)^n
[f(x,t)g(x',t')]\right|_{x'=x,t'=t}.
\end{eqnarray*}

\end{proposition}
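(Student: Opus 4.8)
The plan is to verify the bilinearization by direct substitution, using the first and third equations of \eqref{bilinear form-SS} to dispose of the nonlinear terms and the second equation to account for the remainder. Write $\phi = \kappa(x - 6ct) - \kappa^{3} t$ and $R = g/f$, so that $u = R\,e^{\mathrm{i}\phi}$ with $\phi$ real. Since $f$ is real, $|u|^{2} = |R|^{2} = gg^{*}/f^{2}$. First I would expand $u_{x}, u_{xx}, u_{xxx}$ and $u_{t}$, factor out the common $e^{\mathrm{i}\phi}$, and cancel the term $-\mathrm{i}\kappa^{3} R$ coming from $\phi_{t}$. This reduces \eqref{SS equation} to a single complex evolution equation for $R$ in which the dispersive part $R_{t} - R_{xxx} - 3\mathrm{i}\kappa R_{xx} + 3\kappa^{2} R_{x}$ appears alongside the nonlinear contributions $6c|R|^{2}R_{x}$, $6\mathrm{i}\kappa c|R|^{2}R$ and $3cR(|R|^{2})_{x}$, together with a $-6\mathrm{i}\kappa c R$ term.

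The second step is to pass from the quotient $R = g/f$ to Hirota derivatives via the standard identities $R_{x} = (D_{x} g\cdot f)/f^{2}$, $R_{t} = (D_{t} g\cdot f)/f^{2}$, $R_{xx} = (D_{x}^{2} g\cdot f)/f^{2} - 2SR$ and $R_{xxx} = (D_{x}^{3} g\cdot f)/f^{2} - 6SR_{x}$, where $S := (\log f)_{xx}$ satisfies $(D_{x}^{2} f\cdot f)/f^{2} = 2S$. Now I would invoke the first bilinear equation, rewritten as $D_{x}^{2} f\cdot f = 4c(f^{2} - gg^{*})$, which is precisely the statement that $S = 2c(1 - |R|^{2})$ and hence $S_{x} = -2c(|R|^{2})_{x}$. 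Substituting these algebraic relations collapses the logarithmic-derivative terms $S, S_{x}$ produced by $R_{xx}$ and $R_{xxx}$ into explicit multiples of $|R|^{2}$ and $(|R|^{2})_{x}$, and the self-steepening terms $6c|R|^{2}R_{x}$ and $6\mathrm{i}\kappa c|R|^{2}R$ reorganize correspondingly.

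Collecting terms, the pure bilinear pieces $(D_{t} g\cdot f)/f^{2} - (D_{x}^{3} g\cdot f)/f^{2} - 3\mathrm{i}\kappa (D_{x}^{2} g\cdot f)/f^{2}$ together with a multiple of $R_{x} = (D_{x} g\cdot f)/f^{2}$ reproduce exactly the linear part of the second bilinear equation, once that equation is divided by $f^{2}$ and one uses $(-6\mathrm{i}\kappa c)\,g\cdot f/f^{2} = -6\mathrm{i}\kappa c R$ and $6\mathrm{i}\kappa c\, qg/f^{2} = 6\mathrm{i}\kappa c\,(q/f)R$. What remains is a single scalar identity that, after cancelling the matched $6cR_{x}$ and $6\mathrm{i}\kappa c R$ terms and dividing by $3cR$, takes the form $R_{x} R^{*} - R R^{*}_{x} + 2\mathrm{i}\kappa|R|^{2} = 2\mathrm{i}\kappa\,(q/f)$. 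The final step is to recognize this as the third bilinear equation $(D_{x} + 2\mathrm{i}\kappa)g\cdot g^{*} = 2\mathrm{i}\kappa qf$ divided by $f^{2}$, which closes the argument.

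I expect the \emph{main obstacle} to be the bookkeeping of the nonlinear terms and, in particular, isolating the exact combination that the auxiliary function $q$ must absorb. The reason an auxiliary function is unavoidable here — so that \eqref{SS equation} requires three bilinear equations rather than the two that suffice for the NLSE — is that after using the first equation the self-frequency-shift term $3cR(|R|^{2})_{x}$ generates the imaginary combination $R_{x} R^{*} - R R^{*}_{x}$, which is not itself expressible as a single Hirota form over $f^{2}$ and so must be encoded through $q = (D_{x} + 2\mathrm{i}\kappa)g\cdot g^{*}/(2\mathrm{i}\kappa f)$. Verifying the ratio-to-Hirota identity for the third derivative, and tracking the normalization by the real constant $c$ (with $c=0$ being the trivial linear case), are the places where sign errors are most likely.
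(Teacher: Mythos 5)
Your proposal is correct and follows essentially the same route as the paper's proof: direct substitution of $u=(g/f)e^{\mathrm{i}\phi}$, use of the first bilinear equation to dispose of the $D_x^2 f\cdot f$ contribution, and absorption of the leftover cubic terms into the third equation through the auxiliary function $q$, with each of the three bilinear equations playing exactly the same role. The only difference is bookkeeping: you work with the ratio $R=g/f$ and $S=(\log f)_{xx}$ via the standard quotient-to-Hirota identities, while the paper multiplies everything out into bilinear form and regroups the cubic terms with the identity $9gg^*D_xg\cdot f + 3g^2 D_xg^*\cdot f = -3gf(D_xg\cdot g^*) + 12gg^*(D_xg\cdot f)$ --- the same computation up to an overall factor of $f^4$.
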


\begin{proof}   By substituting \eqref{transformation1} into equation \eqref{SS equation} and rewriting the resulting equation in bilinear forms, we obtain

\begin{equation}
   \begin{aligned}
&-f^{2} D_{t} g \cdot f + f^{2} D_{x}^{3} g \cdot f- 3\left(D_{x} g \cdot f\right)\left(D_{x}^{2} f \cdot f\right) + 3 \mathbf{i} \kappa f^{2}\left(D_{x}^{2} g \cdot f\right)\\
&-3 \mathbf{i} \kappa f g\left(D_{x}^{2} f \cdot f\right)-3 \kappa^{2} f^{2}\left(D_{x} g \cdot f\right) -c
\left(9 g g^{*} D_{x} g \cdot f+3 g^{2} D_x g^{*} \cdot f\right)\\
&-c\left(-6 \mathbf{i} \kappa g f^{3}+6 \mathbf{i} \kappa f g^{2} g^{*}\right)=0.\label{eq:8}
    \end{aligned}
\end{equation}
Apply the following identity to the equation above
\begin{equation}
   \begin{aligned}
   9 g g^{*} D_{x} g \cdot f+3 g^{2} D_x g^{*} \cdot f
   =-3 g f(D_x g \cdot g^*) + 12 g g^*(D_x g \cdot f),
    \end{aligned}
\end{equation}
then (\ref{eq:8}) can be rearranged as
   \begin{eqnarray}
    f^2[(D_x^3-D_t+3 \mathbf{i} \kappa D_x^2 -3( \kappa^2+4c) D_x +6\mathbf{i}c   \kappa)g \cdot f]  +3c  g f[(D_x-2 \mathbf{i} \kappa)g \cdot g^*] \nonumber \\
    -3(D_x g \cdot f)[ (D_x^2  -4c )f \cdot f  +4c  g g^*]-3 \mathbf{i} \kappa f g (D_x^2 f \cdot f)=0. \label{biliear-2}
   \end{eqnarray}
If we require
\begin{equation} \label{bilinear eq1}
  (D_x^2  -4c )f \cdot f +4c  g g^*=0,
\end{equation}
then equation \eqref{biliear-2} reduces to

\begin{equation}
\begin{aligned}
f^2[(D_x^3-D_t+3 \mathbf{i} \kappa D_x^2 -3 (\kappa^2+4c) D_x - 6 \mathbf{i} c \kappa)g \cdot f]
+3c g f[(D_x +2 \mathbf{i} \kappa)g \cdot g^*] = 0,
\end{aligned}
\end{equation}
which can be decomposed as
\begin{equation} \label{bilinear eq2}
\begin{cases}
(D_x^3-D_t+3 \mathbf{i} \kappa D_x^2 -3 (\kappa^2+4c) D_x- 6 \mathbf{i} c \kappa)g \cdot f= -6 \mathrm{i} \kappa c q g\\
(D_x+2 \mathbf{i} \kappa)g \cdot g^*=2 \mathbf{i} \kappa q f
    \end{cases}
\end{equation}
where $q$ is an  auxiliary  function. As a consequence, combing the equations \eqref{bilinear eq1} and \eqref{bilinear eq2} shows that the Sasa-Satsuma equation \eqref{SS equation} can be transformed into the system of bilinear equations \eqref{bilinear form-SS} via the transformation \eqref{transformation1}.
\end{proof}

\subsection{Derivation of multi-breather solutions}
In order to derive multi-breather solutions of the Sasa-Satsuma equation \eqref{SS equation}, we first present a crucial lemma.
\begin{lemma}
The bilinear equations in the KP hierarchy
\begin{eqnarray}
&&(D_{r} D_{x}-2) \tau_{k l} \cdot \tau_{k l}=-2 \tau_{k+1, l} \tau_{k-1, l} \label{KP-1} \\
&&(D_{s} D_{x}-2) \tau_{k l} \cdot \tau_{k l}=-2 \tau_{k, l+1} \tau_{k, l-1} \label{KP-2} \\
&&(D_{x}^{2}-D_{y}+2 a D_{x}) \tau_{k+1, l} \cdot \tau_{k l}=0 \label{KP-3}\\
&&(D_{x}^{2}-D_{y}+2 b D_{x}) \tau_{k, l+1} \cdot \tau_{k l}=0 \label{KP-4}\\
&&\left(D_{x}^{3}+3 D_{x} D_{y}-4 D_{t}+3 a\left(D_{x}^{2}+D_{y}\right)+6 a^{2} D_{x}\right) \tau_{k+1, l} \cdot \tau_{k l}=0 \label{KP-5}\\
&&\left(D_{x}^{3}+3 D_{x} D_{y}-4 D_{t}+3 b\left(D_{x}^{2}+D_{y}\right)+6 b^{2} D_{x}\right) \tau_{k, l+1} \cdot \tau_{k l}=0 \label{KP-6}\\
&&\left(D_{r}\left(D_{x}^{2}-D_{y}+2 a D_{x}\right)-4 D_{x}\right) \tau_{k+1, l} \cdot \tau_{k l}=0 \label{KP-7}\\
&&\left(D_{s}\left(D_{x}^{2}-D_{y}+2 b D_{x}\right)-4 D_{x}\right) \tau_{k, l+1} \cdot \tau_{k l}=0 \label{KP-8}\\
&&\left(D_{s}\left(D_{x}^{2}-D_{y}+2 a D_{x}\right)-4\left(D_{x}+a-b\right)\right) \tau_{k+1, l} \cdot \tau_{k l}+4(a-b) \tau_{k+1, l+1} \tau_{k, l-1}=0 \color{white}{ aaaaa} \label{KP-9}\\
&&\left(D_{r}\left(D_{x}^{2}-D_{y}+2 b D_{x}\right)-4\left(D_{x}+b-a\right)\right) \tau_{k, l+1} \cdot \tau_{k l}+4(b-a) \tau_{k+1, l+1} \tau_{k-1, l}=0 \color{white}{ aaaaa} \label{KP-10}\\
&&\left(D_{x}+a-b\right) \tau_{k+1, l} \cdot \tau_{k, l+1}=(a-b) \tau_{k+1, l+1} \tau_{k l} \label{KP-11}
\end{eqnarray}
admit the $M\times M$ Gram-type determinant solutions
\begin{equation} 
  \tau_{k l}=\left|m_{i j}^{k l}\right|_{M\times M}
\end{equation}
where
\begin{eqnarray}
 m_{i j}^{k l} &=& \int\left(\sum_{m=1}^{2} \phi_{i m}^{k, l}\right)\left(\sum_{n=1}^{2} \bar{\phi}_{j n}^{k, l}\right) d x \\
 &=&   \sum_{m,n=1}^{2}   \frac{1}{p_{im}+q_{jn}}\left(\frac{a-p_{im}}{a+q_{jn}}\right)^{k}\left(\frac{b-p_{im}}{b+q_{jn}}\right)^{l} e^{\xi_{im}+\bar{\xi}_{j n}}
 \\
  \phi_{i m}^{k, l} &=& \left(p_{i m}-a\right)^{k}\left(p_{i m}-b\right)^{l} e^{\xi_{i m}} \\
  \bar{\phi}_{j n}^{k, l} &=& (-1)^{k}\left(q_{j n}+a\right)^{-k}(-1)^{l}\left(q_{j n}+b\right)^{-l} e^{\bar{\xi}_{j n}}
\end{eqnarray}
with
\begin{eqnarray}
  \xi_{im} &=& p_{i m} x+p_{i m}^{2} y+p_{i m}^{3} t+\frac{1}{p_{i m}-a} r+\frac{1}{p_{i m}-b} s+\xi_{im, 0} \\
  \bar{\xi}_{jn} &=& q_{j n} x-q_{j n}^{2} y+q_{j n}^{3} t+\frac{1}{q_{j n}+a} r+\frac{1}{q_{j n}+b} s+\eta_{jn, 0}.
\end{eqnarray}
Here, $p_{i m},q_{j n},  \xi_{im, 0}, \eta_{jn, 0} \, (i,j=1,\cdots M, m , n = 1,2), a$ and $b$ are complex constants  while $k$ and $l$ are integers.
\end{lemma}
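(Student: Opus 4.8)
The plan is to show that the Gram-type determinant $\tau_{kl}=\det(m_{ij}^{kl})$ solves all eleven equations by the classical differentiation-of-a-determinant argument, so that each bilinear equation becomes an algebraic determinant identity. Abbreviate $\Phi_i^{kl}=\sum_{m=1}^{2}\phi_{im}^{k,l}$ and $\bar\Phi_j^{kl}=\sum_{n=1}^{2}\bar\phi_{jn}^{k,l}$, so that $m_{ij}^{kl}=\int\Phi_i^{kl}\bar\Phi_j^{kl}\,dx$. The opening step is purely mechanical: from the explicit forms of $\phi_{im}^{k,l}$ and $\bar\phi_{jn}^{k,l}$ one reads off the linear relations $\partial_x\phi_{im}^{kl}=p_{im}\phi_{im}^{kl}$, $\partial_y\phi_{im}^{kl}=p_{im}^2\phi_{im}^{kl}$, $\partial_t\phi_{im}^{kl}=p_{im}^3\phi_{im}^{kl}$, $\partial_r\phi_{im}^{kl}=\phi_{im}^{k-1,l}$, $\partial_s\phi_{im}^{kl}=\phi_{im}^{k,l-1}$ and $\phi_{im}^{k+1,l}=(p_{im}-a)\phi_{im}^{kl}$, together with the mirror relations for $\bar\phi_{jn}^{kl}$ in which $q_{jn}$ replaces $p_{im}$, the $y$-flow changes sign, and $\partial_r\bar\phi_{jn}^{kl}=-\bar\phi_{jn}^{k+1,l}$.

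\emph{Entry formulas.} Feeding these into the integral representation and integrating the resulting total $x$-derivatives, I would obtain the rank-one derivative formulas for the matrix entries,
\begin{align*}
\partial_x m_{ij}^{kl}&=\Phi_i^{kl}\bar\Phi_j^{kl},\\
\partial_y m_{ij}^{kl}&=(\partial_x\Phi_i^{kl})\bar\Phi_j^{kl}-\Phi_i^{kl}(\partial_x\bar\Phi_j^{kl}),\\
\partial_t m_{ij}^{kl}&=(\partial_x^2\Phi_i^{kl})\bar\Phi_j^{kl}-(\partial_x\Phi_i^{kl})(\partial_x\bar\Phi_j^{kl})+\Phi_i^{kl}(\partial_x^2\bar\Phi_j^{kl}),
\end{align*}
while the single partial-fraction identity $\tfrac{1}{p_{im}-a}+\tfrac{1}{q_{jn}+a}=\tfrac{p_{im}+q_{jn}}{(p_{im}-a)(q_{jn}+a)}$ yields the flow and contiguity relations $\partial_r m_{ij}^{kl}=-\Phi_i^{k-1,l}\bar\Phi_j^{k+1,l}$, $\partial_s m_{ij}^{kl}=-\Phi_i^{k,l-1}\bar\Phi_j^{k,l+1}$, $m_{ij}^{k+1,l}=m_{ij}^{kl}+\Phi_i^{kl}\bar\Phi_j^{k+1,l}$ and $m_{ij}^{k,l+1}=m_{ij}^{kl}+\Phi_i^{kl}\bar\Phi_j^{k,l+1}$. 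Thus all five flows and both index shifts act on the entries by rank-one updates.

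\emph{Determinant identities.} Because each update is rank one, every derivative and every index-shift of $\tau_{kl}$ is expressible as a bordered determinant of the $M\times M$ Gram matrix augmented by a row and a column built from the vectors $(\Phi_i^{kl})$, $(\bar\Phi_j^{kl})$ and their $x$-derivatives; higher flows such as $D_t$ and $D_xD_y$ simply require more borders or higher $x$-derivatives of these vectors. Substituting these expressions into each of \eqref{KP-1}--\eqref{KP-11} turns the bilinear equation into a quadratic identity among bordered determinants, which is in every case an instance of the Jacobi (Desnanot--Jacobi) determinant identity, equivalently a Pl\"ucker relation. I would verify the eleven equations in this way, treating the lower-order ones \eqref{KP-1}--\eqref{KP-4} and \eqref{KP-7}--\eqref{KP-8} first in order to fix the bordered-determinant dictionary and the sign conventions.

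\emph{Main obstacle.} The routine part is voluminous, but the genuine difficulty lies in the cubic equations \eqref{KP-5}--\eqref{KP-6} and, above all, in the mixed equations \eqref{KP-9}--\eqref{KP-11}. These do not collapse to a single Jacobi identity: they simultaneously involve a continuous flow ($D_r$ or $D_s$) and \emph{both} discrete shift parameters $a$ and $b$, and they carry the inhomogeneous cross terms $4(a-b)\tau_{k+1,l+1}\tau_{k,l-1}$ (respectively $4(b-a)\tau_{k+1,l+1}\tau_{k-1,l}$). Handling them requires combining the $k$-contiguity and $l$-contiguity relations from the previous step before a two-border Jacobi identity can be applied, and keeping the $a$- versus $b$-shifted borders mutually consistent is where the bookkeeping must be done with the greatest care. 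Verifying these three relations, rather than the many lower-order ones, is what I expect to be the crux of the proof.
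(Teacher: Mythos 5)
There is nothing in the paper to compare your proposal against: the paper states this lemma without any proof, treating it as a known package of Gram-determinant identities from the KP-hierarchy literature (the citation to Jimbo--Miwa that follows concerns the AKP-to-CKP reduction, not the lemma itself), and the paper's own work begins only after the lemma. Judged on its own terms, your route is the standard and correct way such lemmas are established, and your preparatory computations check out: $\partial_r\bar\phi_{jn}^{k,l}=-\bar\phi_{jn}^{k+1,l}$ does follow from the alternating sign $(-1)^k$ in the definition of $\bar\phi_{jn}^{k,l}$; the partial-fraction identity together with $\int \phi_{im}^{k,l}\bar\phi_{jn}^{k,l}\,dx=\frac{1}{p_{im}+q_{jn}}\phi_{im}^{k,l}\bar\phi_{jn}^{k,l}$ gives $\partial_r m_{ij}^{kl}=-\Phi_i^{k-1,l}\bar\Phi_j^{k+1,l}$; the shift formula $m_{ij}^{k+1,l}=m_{ij}^{kl}+\Phi_i^{kl}\bar\Phi_j^{k+1,l}$ follows from $\frac{a-p}{a+q}-1=-\frac{p+q}{a+q}$; and your $y$- and $t$-formulas are indeed the integrals of total $x$-derivatives of the claimed rank-one expressions. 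The caveat is that what you deliver is a scheme rather than a proof: the entire content of the lemma lies in the eleven verifications, and you defer precisely the ones you correctly identify as hard --- the cubic flows and \eqref{KP-9}--\eqref{KP-11}, where both shift parameters $a,b$ and the cross terms such as $4(a-b)\tau_{k+1,l+1}\tau_{k,l-1}$ enter, so that a single Desnanot--Jacobi identity does not suffice and combined contiguity relations must be used before a two-bordered Jacobi identity applies. If the lemma is to be proved rather than cited (to, e.g., Gilson--Hietarinta--Nimmo--Ohta or Ohta--Yang, where this very technique is carried out in full), those verifications must actually be written out; your framework is the right one in which to do it, but as it stands the crux is announced, not performed.
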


In what follows, we will establish the reductions from the bilinear equations \eqref{KP-1}-\eqref{KP-11} in the KP hierarchy to the bilinear equations \eqref{bilinear form-SS}, which consist of several steps. Once this is accomplished, multi-breather solutions of the Sasa-Satsuma equation \eqref{SS equation} will be derived. We start with the reduction from AKP to CKP \cite{Jimbo1983Miwa}. To this end, we take
$$
q_{j 1}=p_{j 1}, \quad q_{j 2}=p_{j 2},   \quad b=-a, \quad \xi_{jn, 0} = \eta_{jn, 0} 
$$
where $j=1,\cdots M $ and $ n = 1,2$, then we obtain
\begin{equation*}
   \xi_{jn}(x, y, t, r, s)= \bar{\xi}_{jn} (x,-y, t, s, r).
\end{equation*}
Therefore, we have
\begin{eqnarray*}
  m_{j i}^{-l,-k}(x,-y, t, s, r) &=&   \sum_{m,n=1}^{2}   \frac{1}{p_{jm}+q_{in}}\left(\frac{a-p_{jm}}{a+q_{in}}\right)^{-l}\left(\frac{b-p_{jm}}{b+q_{in}}\right)^{-k} e^{(\xi_{jm}+\bar{\xi}_{i n}) (x,-y, t, s, r)}  \\
    &=&    \sum_{m,n=1}^{2}   \frac{1}{p_{jm}+p_{in}}\left(\frac{a-p_{jm}}{a+p_{in}}\right)^{-l}\left(\frac{a+p_{jm}}{a-p_{in}}\right)^{-k} e^{\xi_{i n}+\bar{\xi}_{j m} }
      \\
    &=&     \sum_{m,n=1}^{2}   \frac{1}{p_{jm}+p_{in}} \left(\frac{a-p_{in}}{a+p_{jm}}\right)^{k} \left(\frac{a+p_{in}}{a-p_{jm}}\right)^{l} e^{\xi_{i n}+\bar{\xi}_{j m} }
      \\
    &=&     \sum_{m,n=1}^{2}   \frac{1}{p_{im}+p_{jn}} \left(\frac{a-p_{im}}{a+p_{jn}}\right)^{k} \left(\frac{a+p_{im}}{a-p_{jn}}\right)^{l} e^{\xi_{i m}+\bar{\xi}_{j n} }
    \\
    &=& m_{i j}^{k l}(x, y, t, r, s)
\end{eqnarray*}
and
\begin{equation} \label{symmetry}
\tau_{k l}(x, y, t, r, s)=\tau_{-l, -k}(x,-y, t, s, r).
\end{equation}

Next, we perform the dimension reduction. First, we rewrite $\tau_{k l}$ as
$$
\tau_{k l}=\prod_{i=1}^{M} e^{\xi_{i 2}+\bar{\xi}_{i 2}}\widetilde{\tau}_{k l}
$$
where
\begin{equation*}
\widetilde{\tau}_{k l}  = \left|\widetilde{m}_{i j}^{k l}\right|
\end{equation*}
and
\begin{eqnarray}
  \widetilde{m}_{i j}^{k l} &=&   F_{kl}(p_{i 1}, p_{j 1}) e^{\xi_{i 1}-\xi_{i 2}+\bar{\xi}_{j 1}-\bar{\xi}_{j 2}}+F_{kl}(p_{i 1}, p_{j 2}) e^{\xi_{i 1}-\xi_{i 2}} \\
    && +F_{kl}(p_{i 2}, p_{j 1}) e^{\bar{\xi}_{j 1}-\bar{\xi}_{j 2}}+F_{kl}(p_{i 2}, p_{j 2})
\end{eqnarray}
with
\begin{eqnarray*}
  F_{kl}(p,q) &=& \frac{1}{p+q}\left(\frac{a-p}{a+q}\right)^{k}\left(\frac{a+p}{a-q}\right)^{l} \\
  \xi_{i 1}-\xi_{i 2} &=& \left(p_{i 1}-p_{i 2}\right) x + \left(p_{i 1}^2-p_{i 2}^2\right) y +\left(p_{i 1}^3-p_{i 2}^3\right) t+\left(\frac{1}{p_{i 1}-a}-\frac{1}{p_{i 2}-a}\right) r
  \\&&+\left(\frac{1}{p_{i 1}+a}-\frac{1}{p_{i 2}+a}\right) s + \xi_{i1, 0}-\xi_{i2, 0} \\
  \bar{\xi}_{i 1}-\bar{\xi}_{i 2} &=& \left(p_{i 1}-p_{i 2}\right) x - \left(p_{i 1}^2-p_{i 2}^2\right) y +\left(p_{i 1}^3-p_{i 2}^3\right) t+   \left(\frac{1}{p_{i 1}+a}-\frac{1}{p_{i 2}+a}\right) r
  \\
  && +\left(\frac{1}{p_{i 1}-a}-\frac{1}{p_{i 2}-a}\right) s + \xi_{i1, 0}-\xi_{i2, 0}.
\end{eqnarray*}
Note that
\begin{eqnarray*}
   \left(\partial_{r}+\partial_{s}-\frac{1}{c} \partial_x \right)\widetilde{m}_{i j}^{k l} &=& [G(p_{i1},p_{i2}) + G(p_{j1},p_{j2}) ]   F(p_{i 1}, q_{j 1}) e^{\xi_{i 1}-\xi_{i 2}+\bar{\xi}_{j 1}-\bar{\xi}_{j 2}}
   \\
    &&+G(p_{i1},p_{i2})F(p_{i 1}, q_{j 2}) e^{\xi_{i 1}-\xi_{i 2}}  +G(p_{j1},p_{j2})F(p_{i 2}, q_{j 1}) e^{\bar{\xi}_{j 1}-\bar{\xi}_{j 2}},
\end{eqnarray*}
where 
\begin{eqnarray*}
  G(p,q) &=& \frac{1}{p-a} + \frac{1}{p+a} - \frac{1}{q-a} - \frac{1}{q+a} - \frac{1}{c} \left(p-q\right)
  \\
  &=& (q-p) \left[\frac{1}{\left(p-a\right)\left(q-a\right)}+\frac{1}{\left(p+a\right)\left(q+a\right)}+\frac{1}{c}  \right].
\end{eqnarray*}
Therefore, by taking
$$
\frac{1}{\left(p_{i 1}-a\right)\left(p_{i 2}-a\right)}+\frac{1}{\left(p_{i 1}+a\right)\left(p_{i 2}+a\right)}+\frac{1}{c}=0,
$$
which is equivalent to
$$
\left(p_{i 1}^{2}-a^{2}\right)\left(p_{i 2}^{2}-a^{2}\right)+2 c\left(p_{i 1} p_{i 2}+a^{2}\right) = 0,
$$
we have
\begin{eqnarray}
 \left(\partial_{r}+\partial_{s}\right) \widetilde{\tau}_{k l}&=& \sum_{i,j=1}^M \Delta_{ij}  \nonumber \left(\partial_{r}+\partial_{s}\right)\widetilde{m}_{i j}^{k l}
 \\
    &=&   \frac{1}{c}\sum_{i,j=1}^M \Delta_{ij}   \partial_{x} \widetilde{m}_{i j}^{k l}  \nonumber
     \\
    &=& \frac{1}{c}  \partial_{x} \widetilde{\tau}_{k l}, \label{dimension reduction}
\end{eqnarray}
where $\Delta_{ij}$ denotes the $(i,j)$-cofactor of the matrix $(\widetilde{m}_{i j}^{k l})$.
Thus, with   \eqref{dimension reduction}, we can replace the derivatives in $r$ and $s$ by derivatives in $x$ in the bilinear equations \eqref{KP-1}-\eqref{KP-11}  and obtain
\begin{eqnarray}
&& \left(D_{x}^{2}-4 c\right) \widetilde{\tau}_{k l} \cdot \widetilde{\tau}_{k l}=-2 c\left(\widetilde{\tau}_{k+1, l} \widetilde{\tau}_{k-1, l}+\widetilde{\tau}_{k, l+1} \widetilde{\tau}_{k, l-1}\right) \label{reduced bilinear eq1}\\
&& \left(D_{x}^{3}-D_{t}+3 a D_{x}^{2}+3\left(a^{2}-2 c\right) D_{x}-6 a c\right) \widetilde{\tau}_{k+1, l} \cdot \widetilde{\tau}_{k l}+6 a c \widetilde{\tau}_{k+1, l+1} \widetilde{\tau}_{k, l-1}=0 \label{reduced bilinear eq2}\\
&& \left(D_{x}^{3}-D_{t}-3 a D_{x}^{2}+3\left(a^{2}-2 c\right) D_{x}+6 a c\right) \widetilde{\tau}_{k, l+1} \cdot \widetilde{\tau}_{k l}-6 a c \widetilde{\tau}_{k+1, l+1} \widetilde{\tau}_{k-1, l}=0 \label{reduced bilinear eq3}\\
&& \left(D_{x}+2 a\right) \widetilde{\tau}_{k+1, l} \cdot \widetilde{\tau}_{k, l+1}=2 a \widetilde{\tau}_{k+1, l+1} \widetilde{\tau}_{k l}. \label{reduced bilinear eq4}
\end{eqnarray}
Among the above bilinear equations, the equation \eqref{reduced bilinear eq1} is derived from bilinear equations \eqref{KP-1}-\eqref{KP-2} and  \eqref{dimension reduction} while the bilinear equation \eqref{reduced bilinear eq4} is obtained from the bilinear equation \eqref{KP-11} with $b=-a$. In view of \eqref{dimension reduction} and $b=-a$, the bilinear equations \eqref{reduced bilinear eq2} and \eqref{reduced bilinear eq3} can be derived respectively as follows
\begin{eqnarray*}
&&\dfrac{1}{c}\left[3a \times \eqref{KP-3}+\eqref{KP-5}\right]+3\times(\eqref{KP-7}+\eqref{KP-9}) = 4\times \eqref{reduced bilinear eq2}\\
&& \dfrac{1}{c}\left[3a \times \eqref{KP-4}+\eqref{KP-6}\right]+3\times(\eqref{KP-8}+\eqref{KP-10}) = 4\times \eqref{reduced bilinear eq3}.
\end{eqnarray*}
Since the bilinear equations \eqref{reduced bilinear eq1}-\eqref{reduced bilinear eq4} do not involve derivatives with respect to $y,r$ and $s$, we may take $y=r=s=0$. Then according to \eqref{symmetry}, we have
\begin{equation} \label{symmetry2}
\widetilde{\tau}_{k l}(x, t)=\widetilde{\tau}_{-l, -k}(x, t).
\end{equation}

Finally, we consider the complex conjugate reduction. Let the size of the matrix $(\widetilde{m}_{i j}^{k l})$ be even, i.e., $M = 2N$ and $a=\mathrm{i} \kappa$  be purely imaginary. Further, by imposing the parameter relations
\begin{equation}
  p_{N+j, 1}=p_{j 1}^{*}, \quad p_{N+j, 2}=p_{j 2}^{*},  \quad \xi_{N+j,1, 0} = \xi_{j1, 0}^*,  \quad \xi_{N+j,2, 0} = \xi_{j2, 0}^*, \quad j=1, \cdots, N,
\end{equation}
  we obtain
\begin{eqnarray*}
  \xi_{j n}^*= \xi_{N+j, n}, \quad \bar{\xi}_{j n}^*= \bar{\xi}_{N+j, n}, \quad n = 1,2,
\end{eqnarray*}
and
\begin{equation*}
 F_{0k}^*(p,q) = F_{k0}(p^*,q^*).
\end{equation*}
Then it yields that
\begin{eqnarray*}
(   \widetilde{m}_{i j}^{0k})^* &=&  F_{0k}^*(p_{i 1}, p_{j 1}) e^{\xi_{i 1}^*-\xi_{i 2}^*+\bar{\xi}_{j 1}^*-\bar{\xi}_{j 2}^*}+F_{0k}^*(p_{i 1}, p_{j 2}) e^{\xi_{i 1}^*-\xi_{i 2}^*} \\
    && +F_{0k}^*(p_{i 2}, p_{j 1}) e^{\bar{\xi}_{j 1}^*-\bar{\xi}_{j 2}^*}+F_{0k}^*(p_{i 2}, p_{j 2})
    \\
    &=&  F_{k0}(p_{N+i, 1}, p_{N+j, 1}) e^{\xi_{N+i, 1}-\xi_{N+i, 2}+\bar{\xi}_{N+j, 1}-\bar{\xi}_{N+j, 2}}+F_{k0}(p_{N+i, 1}, p_{N+j, 2}) e^{\xi_{N+i, 1}-\xi_{N+i, 2}} \\
    && +F_{k0}(p_{N+i, 2}, p_{N+j, 1}) e^{\bar{\xi}_{N+j, 1}-\bar{\xi}_{N+j, 2}}+F_{k0}(p_{N+i, 2}, p_{N+j, 2})
    \\
     &=&  \widetilde{m}_{N+i,N+ j}^{k 0}.
\end{eqnarray*}
With similar argument, we can obtain
\begin{eqnarray*}
(   \widetilde{m}_{i, N+j}^{0k})^* &=&  \widetilde{m}_{N+i, j}^{k 0} \\
 (   \widetilde{m}_{N+i, j}^{0k})^* &=&  \widetilde{m}_{i,N+ j}^{k 0} \\
(   \widetilde{m}_{N+i, N+j}^{0k})^* &=&  \widetilde{m}_{i, j}^{k 0}
\end{eqnarray*}
and hence
\begin{eqnarray*}
  \widetilde{\tau}^*_{0 k } &=&  \left|\begin{array}{cc}
(   \widetilde{m}_{i j}^{0k})^* & (   \widetilde{m}_{i, N+j}^{0k})^* \\
(   \widetilde{m}_{N+i, j}^{0k})^*  & (   \widetilde{m}_{N+i, N+j}^{0k})^*
\end{array}\right| \\
&=&  \left|\begin{array}{cc}
\widetilde{m}_{N+i,N+ j}^{k 0} & \widetilde{m}_{N+i, j}^{k 0} \\
\widetilde{m}_{i,N+ j}^{k 0}  & \widetilde{m}_{i, j}^{k 0}
\end{array}\right| \\
&=&  \left|\begin{array}{cc}
\widetilde{m}_{i, j}^{k 0} & \widetilde{m}_{i,N+ j}^{k 0} \\
  \widetilde{m}_{N+i, j}^{k 0}  &  \widetilde{m}_{N+i,N+ j}^{k 0}
\end{array}\right| \\
\\
&=& \widetilde{\tau}_{ k 0} .
\end{eqnarray*}
On the other hand, using similar method as above, we can prove that
$$
\widetilde{\tau}^*_{k k } =  \widetilde{\tau}_{ k k},
$$
which implies that $\widetilde{\tau}_{ k k}$ is real.
Define
$$
\widetilde{f}=\widetilde{\tau}_{00}, \quad \widetilde{g}=\widetilde{\tau}_{10}, \quad \widetilde{h}=\widetilde{\tau}_{01}, \quad \widetilde{q}=\widetilde{\tau}_{11},
$$
then we find that  $\widetilde{f}$ and $\widetilde{q}$ are real-valued functions and $\widetilde{g}^*=\widetilde{h}$. According to \eqref{symmetry2}, we have
\begin{equation*}
  \widetilde{\tau}_{-1,0} = \widetilde{g}^*, \quad \widetilde{\tau}_{0,-1} = \widetilde{g}.
\end{equation*}
Therefore, the bilinear equations \eqref{reduced bilinear eq1}-\eqref{reduced bilinear eq4} become
\begin{equation} \label{bilinear-2c}
   \left\{\begin{array}{l}
\left(D_{x}^{2}-4 c\right) \widetilde{f} \cdot \widetilde{f}=-4 c \widetilde{g}\, \widetilde{g}^{*} \\
\left(D_{x}^{3}-D_t+3 i \kappa D_{x}^{2}-3\left(\kappa^{2}+2 c\right) D_{x}-6 i \kappa c\right) \widetilde{g} \cdot \widetilde{f}+6 i \kappa c \widetilde{q} \, \widetilde{g}=0 \\
\left(D_{x}+2 i \kappa\right) \widetilde{g} \cdot \widetilde{g}^{*}=2 \mathrm{i} \kappa \widetilde{q} \widetilde{f}.
\end{array}\right.
\end{equation}
Let
\begin{eqnarray*}
 \widehat{f}(x,t) = \widetilde{f}(x-6ct,t), \quad  \widehat{g}(x,t) = \widetilde{g}(x-6ct,t), \quad  \widehat{q}(x,t) = \widetilde{q}(x-6ct,t),
\end{eqnarray*}
then the system of bilinear equations \eqref{bilinear-2c} reduces to

\eqref{bilinear form-SS}, and thus we can obtain the following solution   to the Sasa-Satsuma equation \eqref{SS equation}
\begin{equation} 
    u=\frac{\widehat{g}}{\widehat{f}} e^{\mathrm{i}\left(\kappa(x-6 c t)-\kappa^{3} t\right)},
\end{equation}
where
\begin{equation*}
 \widehat{f}(x,t) = \widetilde{\tau}_{00}(x-6ct,t), \quad  \widehat{g} = \widetilde{\tau}_{10}(x-6ct,t).
\end{equation*}
In addition, let
 \begin{eqnarray*}
   f(x,t) &=& \prod_{i=1}^{2N} e^{(\xi_{i 2}+\bar{\xi}_{i 2})(x-6ct, t)} \widehat{f}(x,t) = \tau_{00}(x-6ct,t)  \\
   g(x,t) &=& \prod_{i=1}^{2N} e^{(\xi_{i 2}+\bar{\xi}_{i 2})(x-6ct, t)} \widehat{g}(x,t) = \tau_{10}(x-6ct,t)
 \end{eqnarray*}
then it is found that
\begin{equation} 
    u=\frac{g}{f} e^{\mathrm{i}\left(\kappa(x-6 c t)-\kappa^{3} t\right)},
\end{equation}
where
\begin{eqnarray*}
  f(x,t) =  \tau_{0}(x-6ct,t), \quad g(x,t) =  \tau_{1}(x-6ct,t)
\end{eqnarray*}
and
\begin{equation*}
   \tau_{k} =  \tau_{k0}, \quad k = 0,1
\end{equation*}
also solves the Sasa-Satsuma equation \eqref{SS equation}. Thus the proof is completed.

\section{Dynamics of breather solutions} \label{Dynamics of breather solutions}

In this section, we discuss the dynamics of the breather solutions of the Sasa-Sastuma equation derived in Theorem \ref{thm}.

\subsection{First-order breather solutions} \label{First-order breather solution}

To obtain the first-order breather solutions to    equation \eqref{SS equation}, we set $N = 1$ in Theorem \ref{thm}.
In this case, we have
\begin{eqnarray*}
\tau_0 = \left|
\begin{array}{cc}
   m_{11}^{(0)} &m_{12}^{(0)}   \\
 m_{21}^{(0)} &m_{22}^{(0)}
\end{array}
\right|
,
\quad
\tau_1 = \left|
\begin{array}{cc}
   m_{11}^{(1)} &m_{12}^{(1)}   \\
 m_{21}^{(1)} &m_{22}^{(1)}
\end{array}
\right|,
\end{eqnarray*}
where
\begin{equation*}
  m_{ij}^{(k)} = \sum\limits_{m,n=1}^{2} \frac{1}{p_{im} + p_{jn}} \left(\dfrac{a-p_{1 m}}{a+p_{1 n}}\right)^k e^{\xi_{im} + \xi_{jn}}, \quad i,j=1, 2, k =0,1,
\end{equation*}
$a = \mathbf{i} \kappa$ is purely imaginary, $ \xi_{im}=p_{im} x +p_{im}^{3} t+ \xi_{im,0} \, (m=1, 2)$,
and the complex parameters $  \xi_{im,0}, p_{im} \, $  satisfy the constraints
 \begin{eqnarray*} 
   \left(p_{i 1}^{2}+\kappa^{2}\right)\left(p_{i 2}^{2}+\kappa^{2}\right)=  -2 c\left(p_{i 1} p_{i 2}-\kappa^{2}\right)
     \end{eqnarray*}
and
\begin{equation*}
    p_{2 m}=p_{1 m}^{*}, \quad   \xi_{2m, 0} =   \xi_{1m, 0}^*.
\end{equation*}

After some tedious algebra, we can express the solutions  \eqref{SS solution} in terms of trigonometric functions and hyperbolic functions
\begin{equation} \label{first-order breather}
 u=\frac{g(x, t)}{f(x, t)} e^{\mathrm{i}\left(\kappa(x-6 c t)-\kappa^{3} t\right)},
\end{equation}
with
\begin{eqnarray*}
f(x, t) &=& \alpha_1  + M_1\cosh (2W_1 - \theta_1) +\alpha_2  \cos (V_1) \cosh (W_1) +   \alpha_3  \cos (V_1) \sinh (W_1)  \nonumber\\
    && +  \alpha_4 \sin (V_1) \cosh (W_1) +   \alpha_5 \sin (V_1) \sinh (W_1)  + \alpha_6 \cos (2 V_1- \theta_2)
    \\
g(x, t) &=& \beta_1  + M_2\cosh (2W_1 - \theta_3) +\beta_2  \cos (V_1) \cosh (W_1) +   \beta_3  \cos (V_1) \sinh (W_1)  \nonumber\\
    && +  \beta_4 \sin (V_1) \cosh (W_1) +   \beta_5 \sin (V_1) \sinh (W_1)  + \beta_6 \cos (2 V_1- \theta_4) \\
    && + \mathbf{i} \left[ \gamma_1  + M_3\cosh (2W_1 - \theta_5) +\gamma_2  \cos (V_1) \cosh (W_1) +   \gamma_3  \cos (V_1) \sinh (W_1) \right. \nonumber\\
    && \left. +  \gamma_4 \sin (V_1) \cosh (W_1) +   \gamma_5 \sin (V_1) \sinh (W_1)  + \gamma_6 \cos (2 V_1- \theta_6) \right],
\end{eqnarray*}
where $V_1,W_1$ are linear functions in $x$ and $t$ with real coefficients and $M_j,\alpha_k, \beta_k, \gamma_k, \theta_k \, (k =1\dots,6)$ are real constants (see Appendix for their explicit expressions). The above representations for $f$ and $g$ reveal that \eqref{first-order breather} is a breather solution to the Sasa-Sastuma equation \eqref{SS equation}.


\begin{figure*}[!ht]
  \centering
\subfigure[]{%
      \includegraphics[width=55mm]{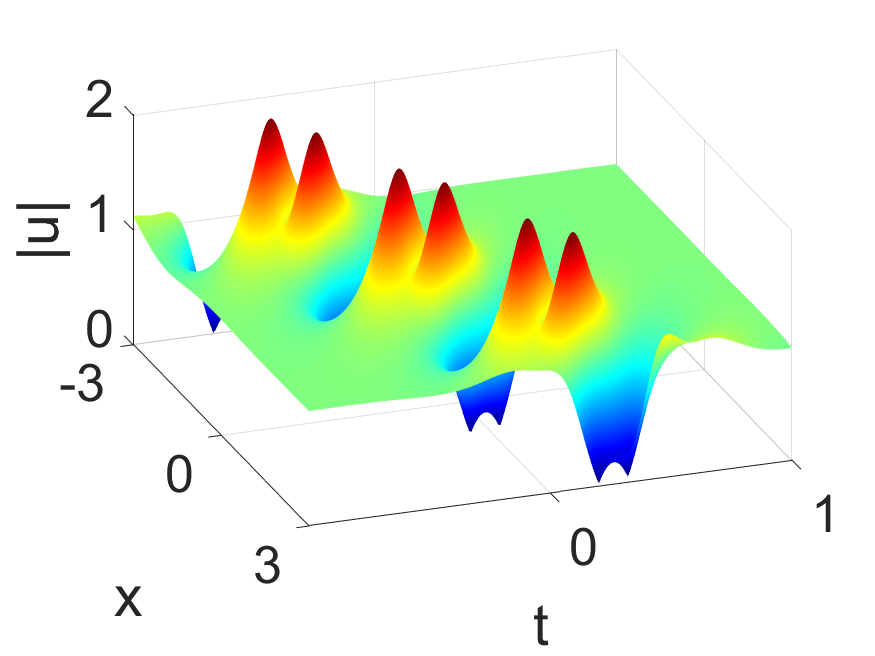}
      \label{1st order breather-1-1}}
\subfigure[]{%
      \includegraphics[width=55mm]{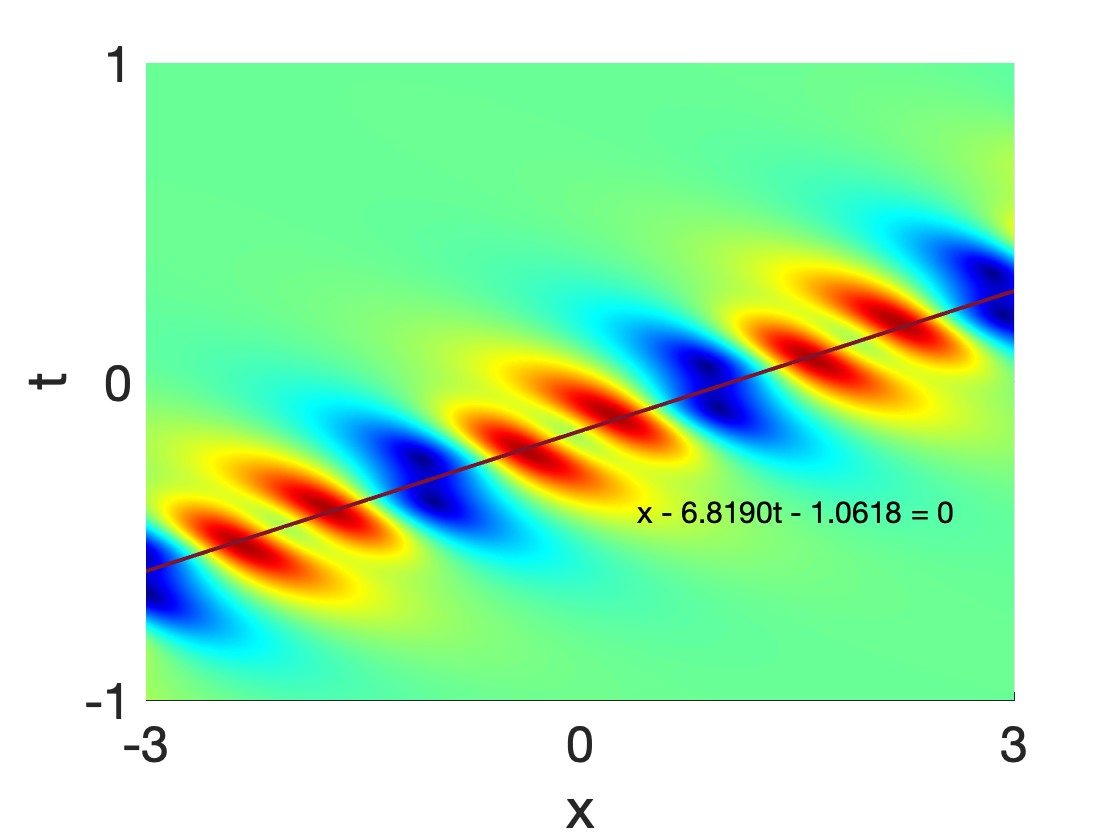}
      \label{1st order breather-1-2}}
\subfigure[]{%
      \includegraphics[width=55mm]{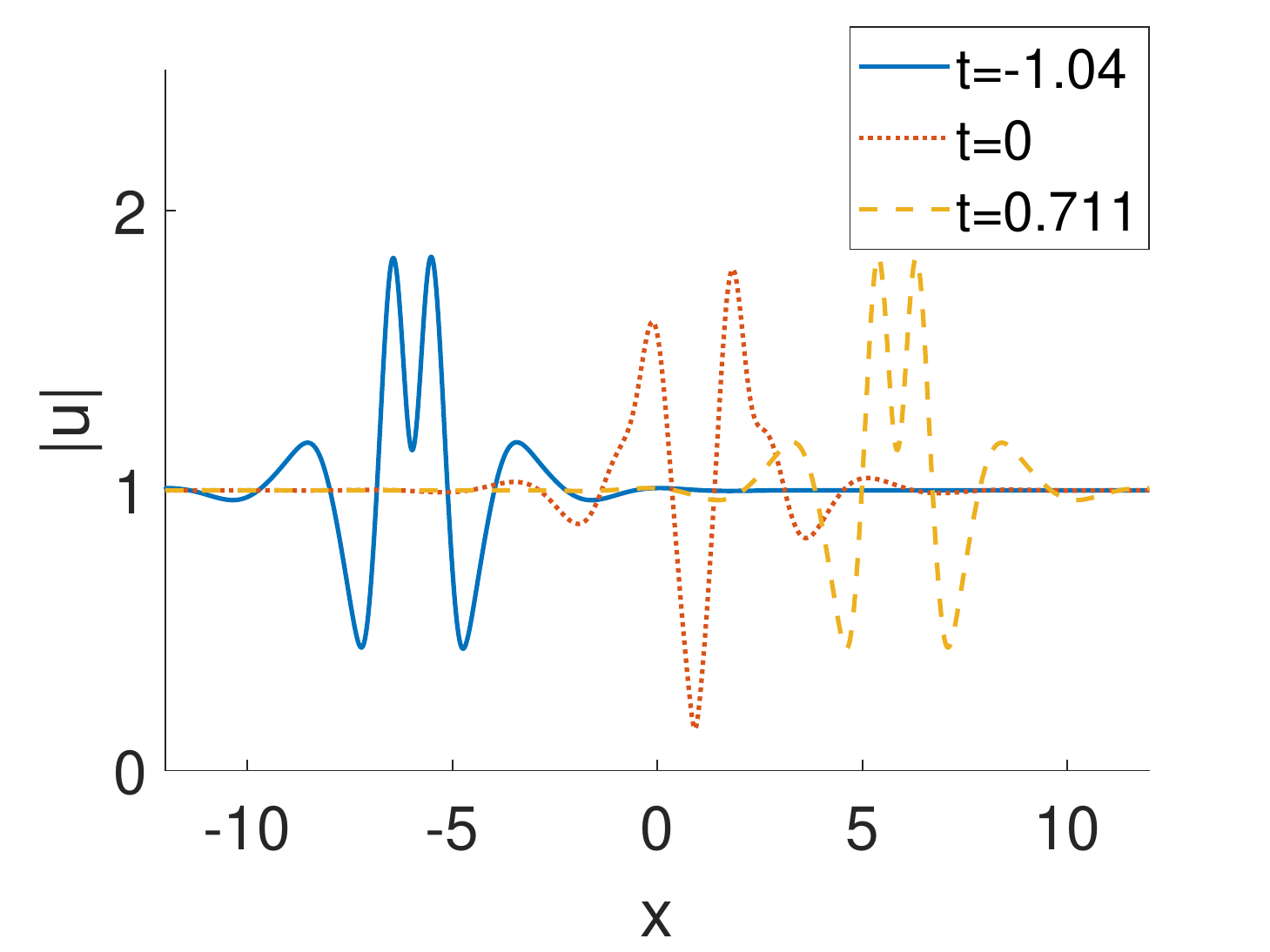}
      \label{1st order breather-1-3}}
\subfigure[]{%
      \includegraphics[width=55mm]{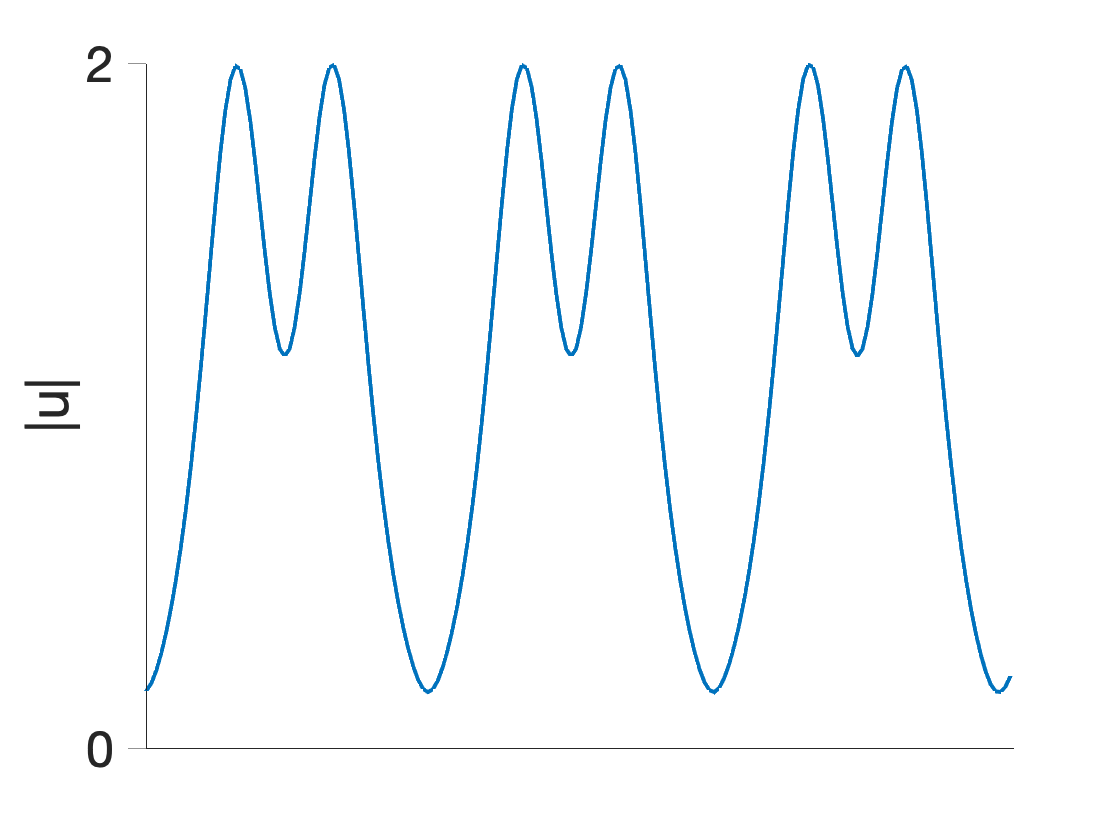}
      \label{1st order breather-1-4}}
  \caption{(Color online) A first-order breather solution with parameter values \(c=-1,\kappa=-1/2,\xi_{11,0}=\xi_{12,0}=0\), \(p_{11}=1+2\mathbf{i}\), $ p_{12} =  4/29-9\mathbf{i}/58$. (b) is the corresponding density plot of (a), (c) corresponds to the time evolution of (a) and (d) is  the intersection between the plane $ x - 6.819t - 1.0618=0$ and the breather. }
  \label{1st order breather-1}
\end{figure*}

In contrast with many integrable equations, a remarkable feature displayed by the Sasa-Satsuma equation \eqref{SS equation} is that it possesses double-hump one soliton solutions \cite{Sasa1991Satsuma}. Interestingly, this property can also be discovered in the breather solutions. This type of breather solution for parameters
\begin{equation*}
  c= - 1, \quad \kappa = -1/2, \quad p_{11} = 1+2\mathbf{i}, \quad p_{12} = \frac{4}{29}+\frac{9}{58} \mathbf{i}, \quad \xi_{11,0}=0, \quad \xi_{12,0}=0
\end{equation*}
is depicted in Figure \ref{1st order breather-1-1}. It is clear that this first-order breather contains two local maxima and three local minima in each period, where one local minimum is much bigger than the other two and located between two local maxima while the other two local minima are located on the same side of the local maxima. To be more precise, this breather reaches its peaks  at $(x,t) \approx (1.6100,   0.0700), (2.2000,    0.1850) $, and a trough at $(x,t) \approx (1.9150,    0.1250) $. Numerical computations indicate that its period is approximately 2.00112 and the local minima between two local maxima are located on the line  $L: x \approx 6.819t + 1.0618$ (see Figure \ref{1st order breather-1-2}). As displayed in Figure \ref{1st order breather-1-4}, taking the intersection of the line, the breather produces a double-hump periodic wave. Therefore, this breather may serve as a counterpart of the double-hump one soliton of the Sasa-Satsuma equation.

\begin{figure*}[!ht]
  \centering
\subfigure[]{%
      \includegraphics[width=55mm]{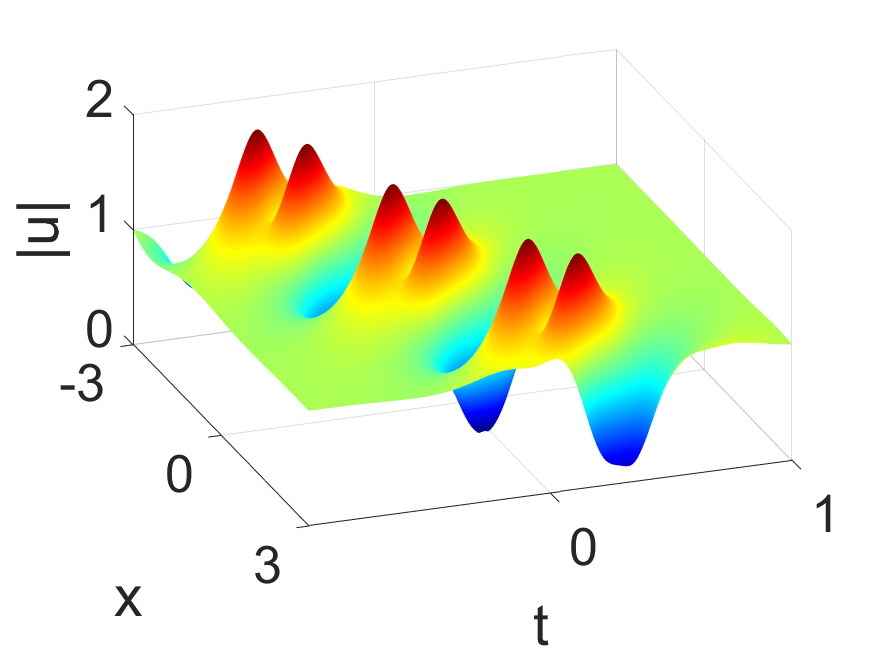}
      \label{1st order breather-2-1}}
\subfigure[]{%
      \includegraphics[width=55mm]{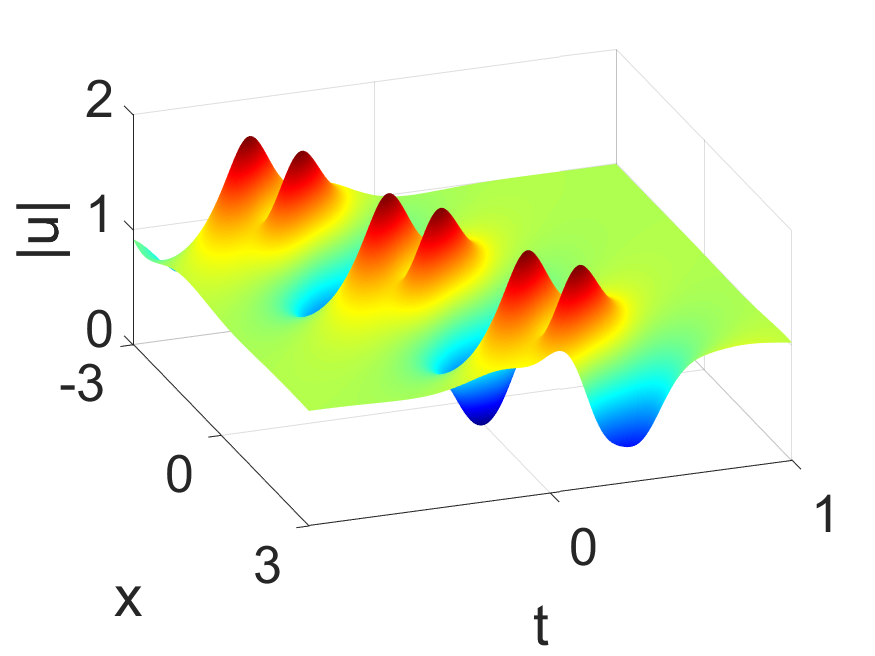}
      \label{1st order breather-2-2}}
\subfigure[]{%
      \includegraphics[width=55mm]{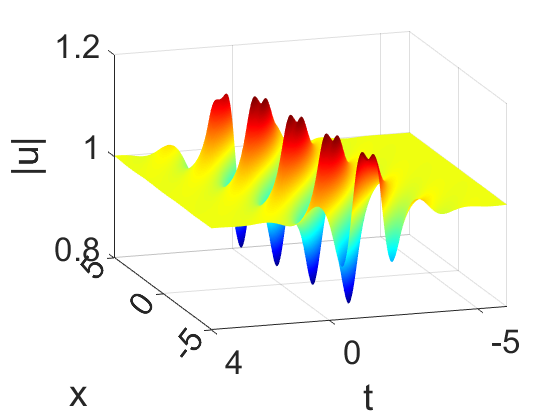}
      \label{1st order breather-2-3}}
\subfigure[]{%
      \includegraphics[width=55mm]{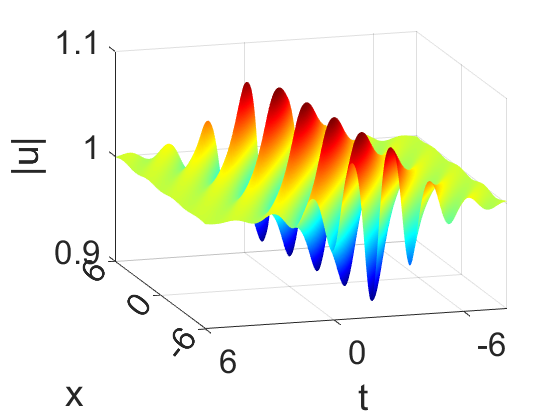}
      \label{1st order breather-2-4}}
  \caption{(Color online) First-order breather solutions with parameter values \(c=-1,\kappa=-1/2,\xi_{11,0}=\xi_{12,0}=0\) and (a) \(p_{11}=0.87+2\mathbf{i}\), (b) \(p_{11}=0.8+2\mathbf{i}\), (c) \(p_{11}=0.2+2\mathbf{i}\), (d) \(p_{11}=0.1+2\mathbf{i}\),  where \(p_{12}\) is given by \eqref{p12}.} 
  \label{1st order breather-2}
\end{figure*}

According to Remark \ref{number of free parameters}, the solutions \eqref{first-order breather} contain seven free real parameters. Varying these parameters will excite  various 
interesting wave profiles of the breather solutions. To illustrate this, we fix the parameter values
\begin{eqnarray*}
   c= - 1, \quad \kappa = -1/2, \quad \Im{p_{11}} = 2, \quad \xi_{11,0}=0, \quad \xi_{12,0}=0
\end{eqnarray*}
and let $p=\Re{p_{11}}$ be free. In addition, we choose (see Remark \ref{p_i2})
\begin{equation}\label{p12}
p_{12} = \frac{4 \kappa ^2 p_{i1} -  \mathbf{i} \kappa  \left(p_{i1}^2-3 \kappa ^2\right)}{\kappa ^2+p_{i1}^2}.
\end{equation}
Then the wave profiles of the breather solutions can exhibit an intriguing sequence of transitions by altering the values of $p$. Geometrically these wave profiles can be defined as $(m,n)$-type, where $m$ and $n$ represent the numbers of local maxima and minima in one period respectively. If we start from $p=1$, then previous discussions imply that it corresponds to a $(2,3)$-type breather (see Figure \ref{1st order breather-1}). Subsequently, the two smaller local minima will approach each other and merge into a single minimum by changing $p$ and hence the wave profile becomes $(2,2)$-type (see Figures \ref{1st order breather-2-1} and \ref{1st order breather-2-2}). On further changing $p$, the local minimum located between two local maxima is converted to a saddle point and the breather turns into $(2,1)$-type (see Figure \ref{1st order breather-2-3}). This is followed by $(1,1)$-type breather (see Figure \ref{1st order breather-2-4}) with the decrease of $p$ after two local maxima coalesce into a single maximum.

In the above process, the sign of $p$ is positive. Interestingly, similar behaviours can be observed as well for negative $p$. In this case, the wave profiles will traverse the three types of $(1,2), (2,2)$ and $(3,2)$ by varying $p$ (see Figure \ref{1st order breather-3}).


\begin{figure*}[!ht]
  \centering
\subfigure[]{%
      \includegraphics[width=43mm]{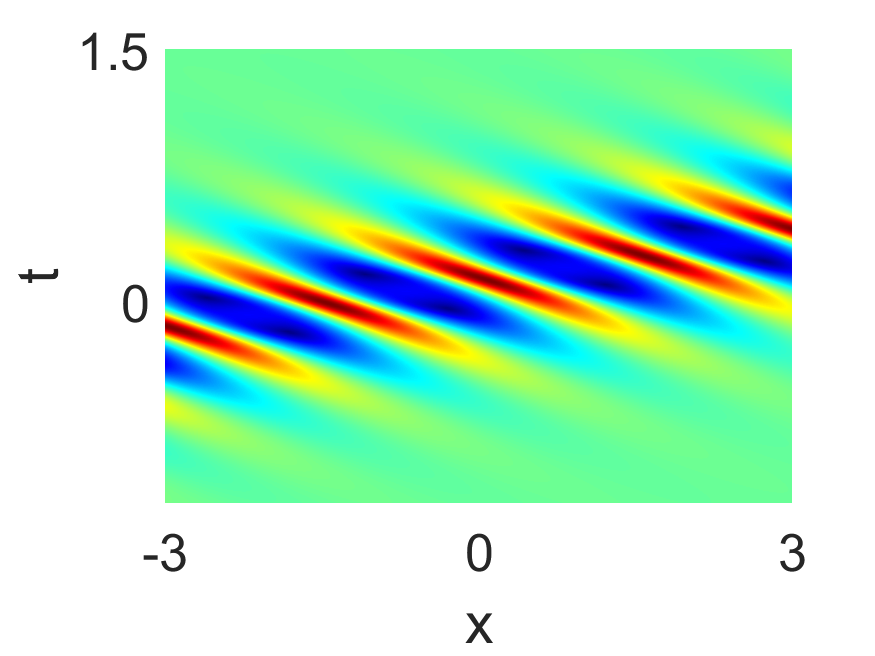}
      \label{1st order breather-3-1}}
\subfigure[]{%
      \includegraphics[width=43mm]{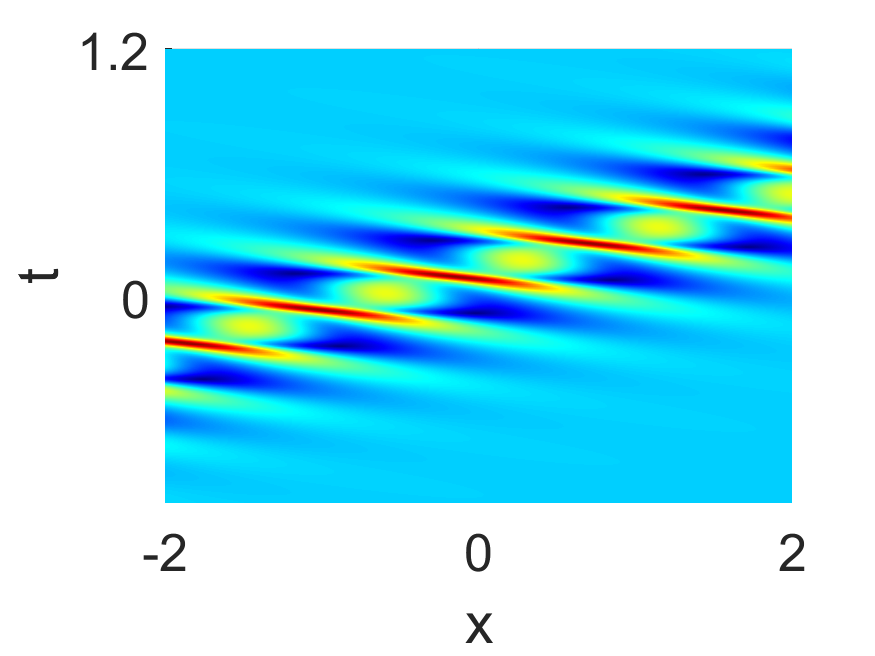}
      \label{1st order breather-3-2}}
\subfigure[]{%
      \includegraphics[width=43mm]{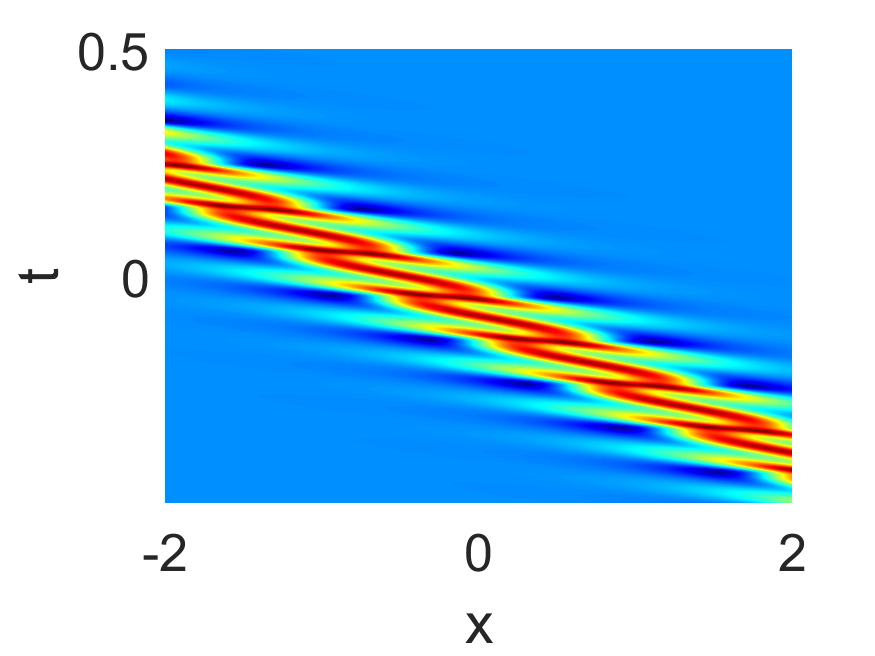}
      \label{1st order breather-3-3}}
  \caption{(Color online) First-order breather solutions with parameter values \(c=-1,\kappa=-1/2,\xi_{11,0}=\xi_{12,0}=0\) and (a) \(p_{11}=-0.6+2\mathbf{i}\), (b) \(p_{11}=-1.6+2\mathbf{i}\), (c) \(p_{11}=-3.5+2\mathbf{i}\), where \(p_{12}\) is given by \eqref{p12}. }
  \label{1st order breather-3}
\end{figure*}

Note that when we fix the parameter values of $c,\kappa$ and $p_{11}$, the equation \eqref{parameter constraint1} yields two choices for $p_{12}$. Thus, distinct configurations of breather profiles for the same input parameters are possible. The first possible configuration is depicted in Figure \ref{1st order breather-1-1}, while the second complex root of the equation \eqref{parameter constraint1} gives
$p_{12}=4/13-25/26\mathbf{i}$,
leading to a completely different wave profile (see Figure \ref{1st order breather-4}).

\subsection{Higher-order breather solutions}

Second-order breather solutions to the equation \eqref{SS equation} correspond to $N = 2$ in \eqref{SS solution_tau fucntion}. In this circumstance, the functions $\tau_k \, (k=0,1)$ could be obtained from \eqref{SS solution_tau fucntion} as
\begin{eqnarray*}
\tau_0 = \left|
\begin{array}{cccc}
   m_{11}^{(0)} &m_{12}^{(0)} & m_{13}^{(0)} &m_{14}^{(0)}   \\
 m_{21}^{(0)} &m_{22}^{(0)} & m_{23}^{(0)} &m_{24}^{(0)}   \\
   m_{31}^{(0)} &m_{32}^{(0)} & m_{33}^{(0)} &m_{34}^{(0)}   \\
 m_{41}^{(0)} &m_{42}^{(0)} & m_{43}^{(0)} &m_{44}^{(0)}
\end{array}
\right|
,
\quad
\tau_1 = \left|
\begin{array}{cccc}
   m_{11}^{(1)} &m_{12}^{(1)} & m_{13}^{(1)} &m_{14}^{(1)}   \\
 m_{21}^{(1)} &m_{22}^{(1)} & m_{23}^{(1)} &m_{24}^{(1)}   \\
   m_{31}^{(1)} &m_{32}^{(1)} & m_{33}^{(1)} &m_{34}^{(1)}   \\
 m_{41}^{(1)} &m_{42}^{(1)} & m_{43}^{(1)} &m_{44}^{(1)}
\end{array}
\right|,
\end{eqnarray*}
with matrix entries
\begin{equation*}
  m_{ij}^{(k)} = \sum\limits_{m,n=1}^{2} \frac{1}{p_{im} + p_{jn}} \left(\dfrac{a-p_{1 m}}{a+p_{1 n}}\right)^k e^{\xi_{im} + \xi_{jn}}, \quad i,j=1, \dots, 4,
\end{equation*}
where $a = \mathbf{i} \kappa$ is purely imaginary, $ \xi_{im}=p_{im} x +p_{im}^{3} t+ \xi_{im,0} \, (m=1, 2)$,
and the complex parameters $ \xi_{im,0}, p_{im} $  satisfy the relations  \eqref{parameter constraint1} and \eqref{parameter constraint2-1}. Similar to the first-order breather solutions, the second-order breather solutions can also be expressed in terms of trigonometric functions and hyperbolic functions. Since the expressions are very complicated, we omit their explicit forms.
\begin{figure*}[!ht]
  \centering
\subfigure[]{%
      \includegraphics[width=40mm]{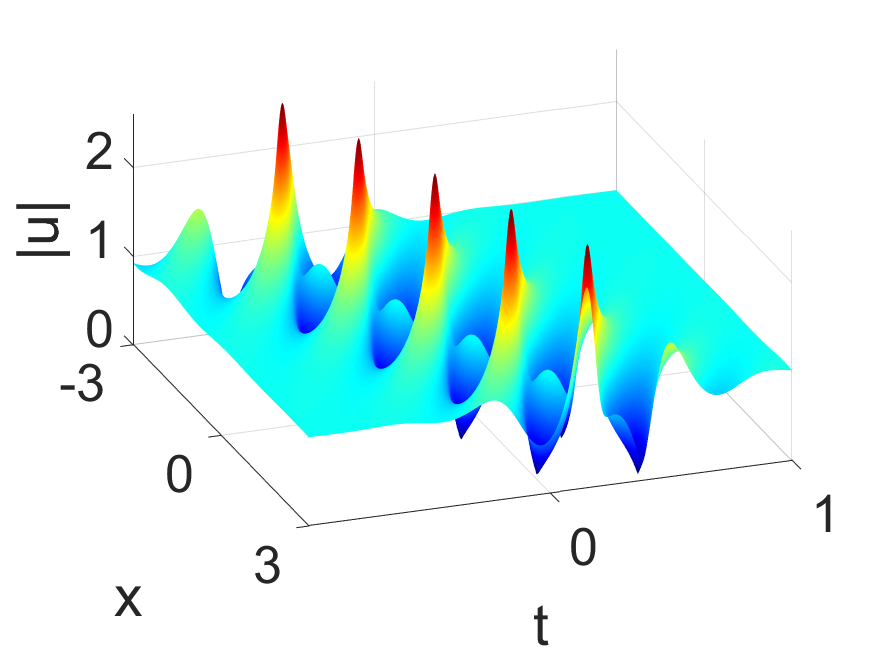}
      \label{1st order breather-4-1}}
\subfigure[]{%
      \includegraphics[width=40mm]{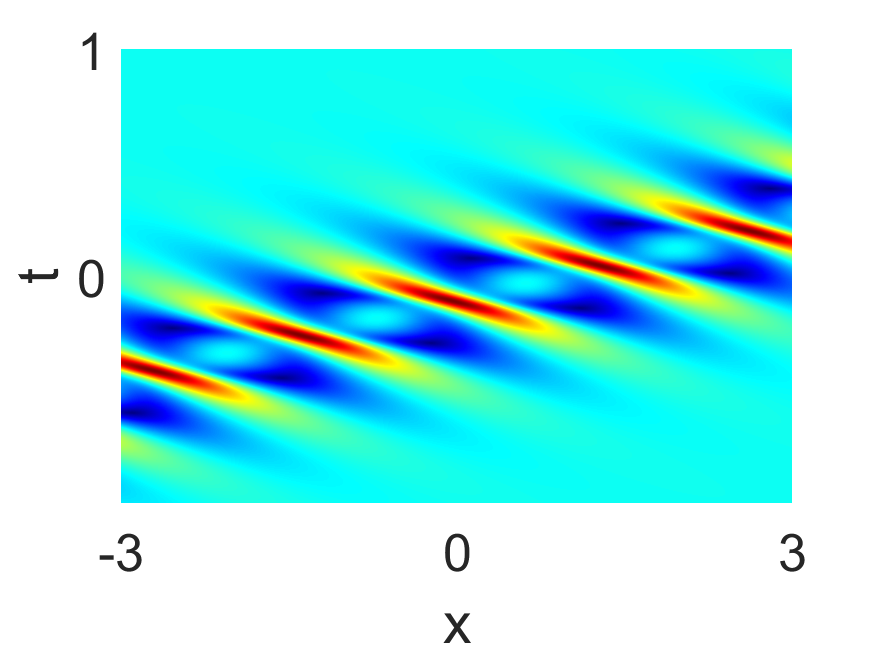}
      \label{1st order breather-4-2}}
  \caption{(Color online) First-order breather solutions with parameter values \(c=-1,\kappa=-1/2,\xi_{11,0}=\xi_{12,0}=0\) and (a) \(p_{11}=1+2\mathbf{i},p_{12}=4/13-25/26\mathbf{i}\). (b) is the corresponding density plots of (a). }
  \label{1st order breather-4}
\end{figure*}
As pointed in Remark \ref{number of free parameters}, second-order breather solutions contain the free parameters $\kappa, p_{i1} $ and $\xi_{im, 0} \, (i, m =1,2)$, where $\kappa$ is real and $p_{i1}, \xi_{im, 0}$ are complex. A variety of fascinating wave profiles can be depicted for different choices of parameter values. Since second-order breathers describe the interactions between two first-order breathers, each of them can be classified into $(m_1,n_1)$-$(m_2,n_2)$-type if it comprises two first-order breathers that are $(m_1,n_1)$-type and $(m_2,n_2)$-type respectively. In Section \ref{Dynamics of breather solutions}\ref{First-order breather solution}, six types of first-order breathers have been illustrated, and hence they give rise to 21 types of second-order breathers. To demonstrate this, we take the parameters
\begin{eqnarray*}
 && c= - 1, \quad \kappa = -1/2, \quad p_{11} = 0.95+1.65\mathbf{i}, \quad p_{12} = 0.8+2\mathbf{i}, \quad p_{21} = \frac{38}{221}  + \frac{49}{442}\mathbf{i},\\
 && p_{22} = \dfrac{80}{689} + \dfrac{189}{1378}\mathbf{i}, \quad \xi_{11,0}= 0, \quad \xi_{12,0}= 0, \quad \xi_{21,0}= 0, \quad \xi_{22,0}=0.
\end{eqnarray*}
As shown in Figure \ref{2nd order breather-1}, this corresponds to a $(2,2)$-$(2,3)$-type second-order breather. It can also be seen clearly that the two breathers pass through each other without any change of shape or velocity, and thus the collision between them is elastic. If we choose other parameter values, then we may obtain second-order breathers consisting of two first-order breathers that belong to distinct types (see Figure \ref{2nd order breather-2}) or the same type (see Figure \ref{2nd order breather-3}).
\begin{figure*}[!ht]
  \centering
\subfigure[]{%
      \includegraphics[width=43mm]{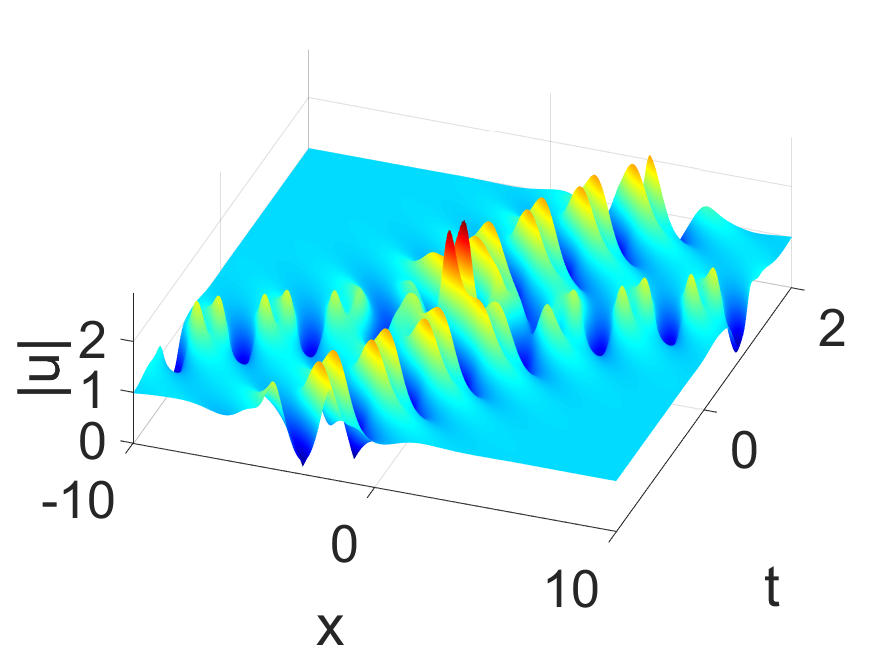}
      }
\subfigure[]{%
      \includegraphics[width=43mm]{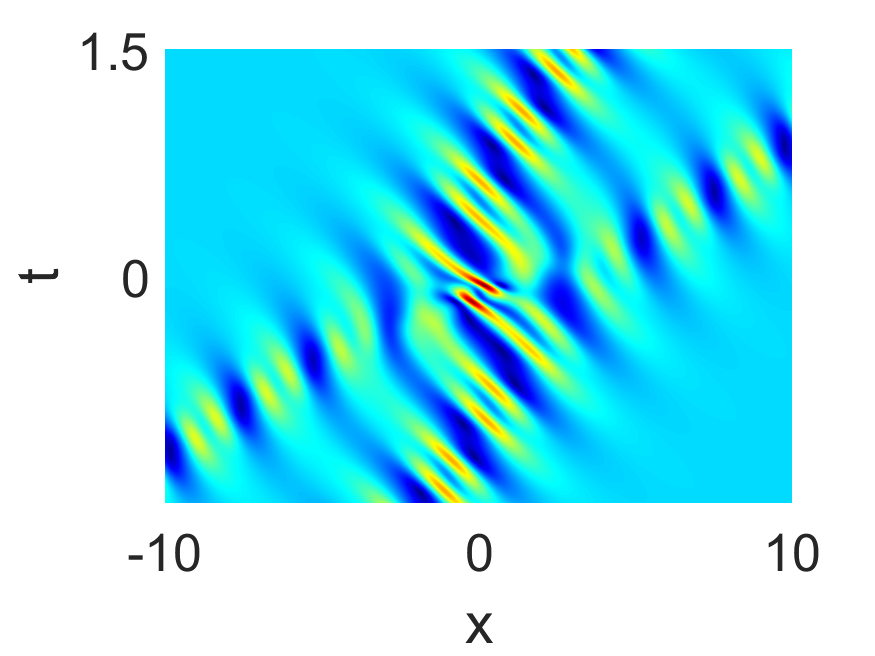}
      }
\subfigure[]{%
      \includegraphics[width=43mm]{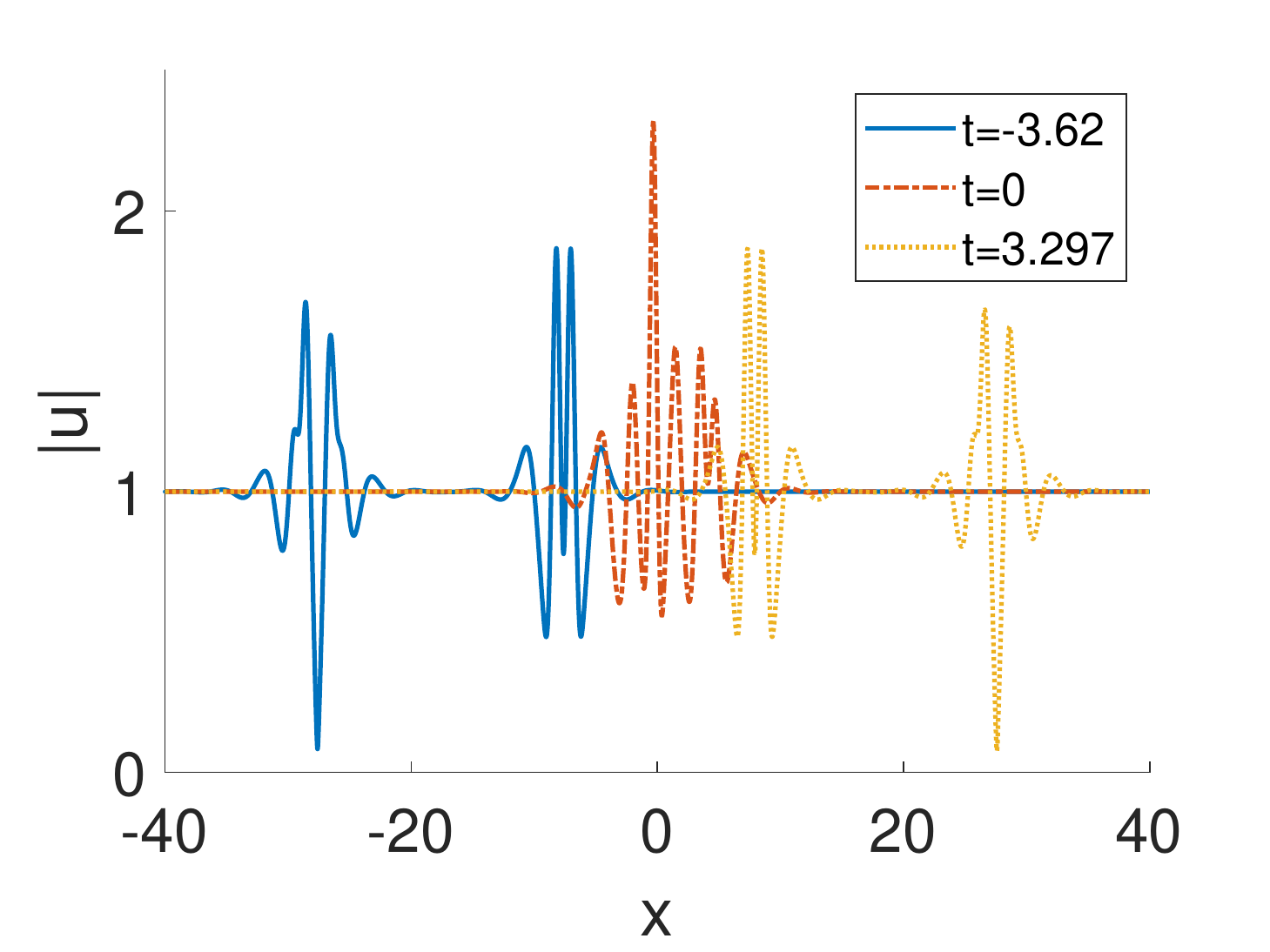}
      }
  \caption{(Color online) Second-order breather solutions with parameter values \(c=-1,\kappa=-1/2, p_{11}=0.95+1.65\mathbf{i}, p_{21}=0.8+2\mathbf{i}, \xi_{11,0}=\xi_{12,0}=0\), where \(p_{12}\) and \(p_{22}\) are given by \eqref{p12}. (b) is the corresponding density plot of (a), and (c) corresponds to the time evolution of (a). }
  \label{2nd order breather-1}
\end{figure*}

\begin{figure*}[!ht]
  \centering
\subfigure[]{%
      \includegraphics[width=55mm]{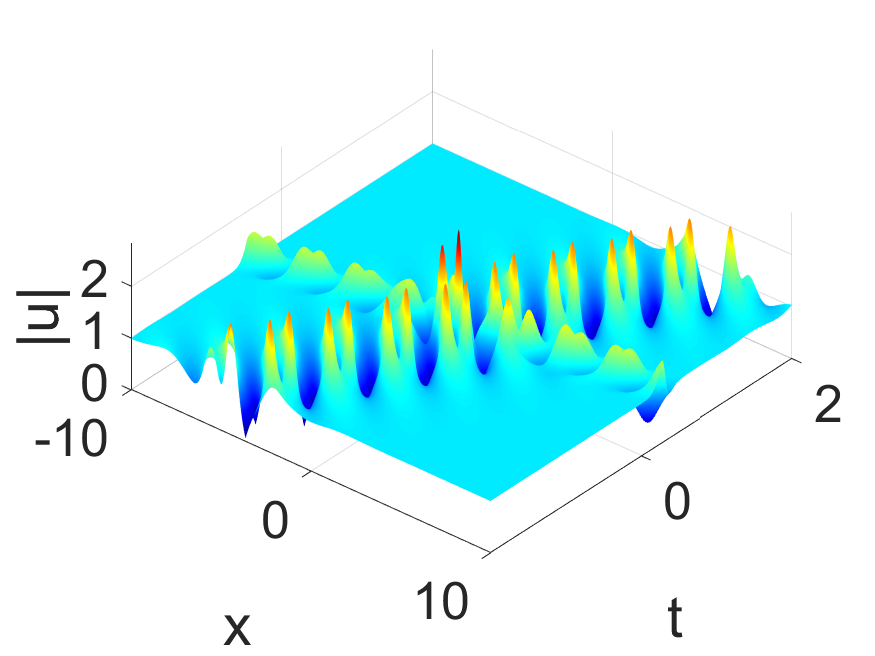}
      \label{2nd order breather-1-1}}
\subfigure[]{%
      \includegraphics[width=55mm]{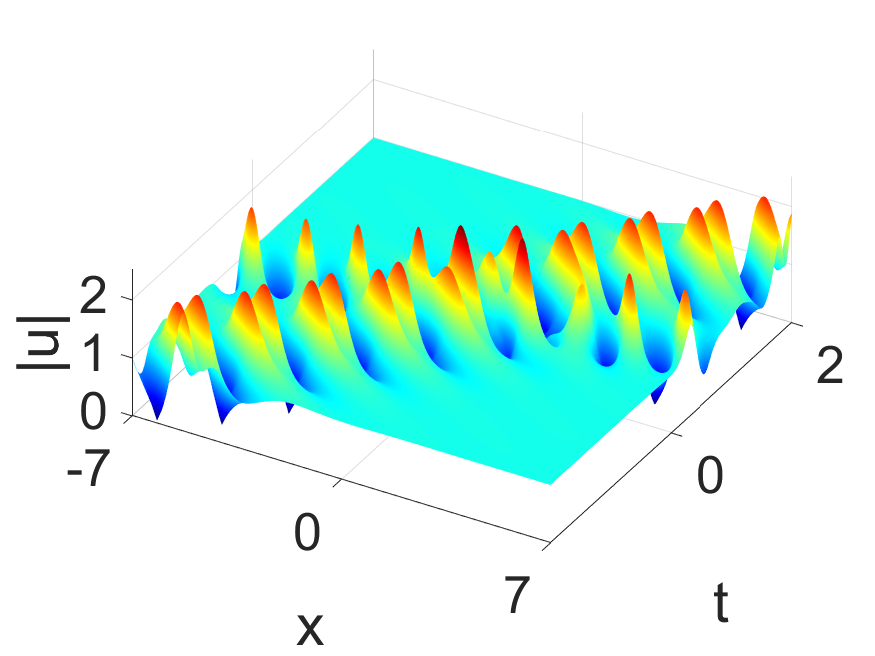}
      \label{2nd order breather-1-3}}
\subfigure[]{%
      \includegraphics[width=55mm]{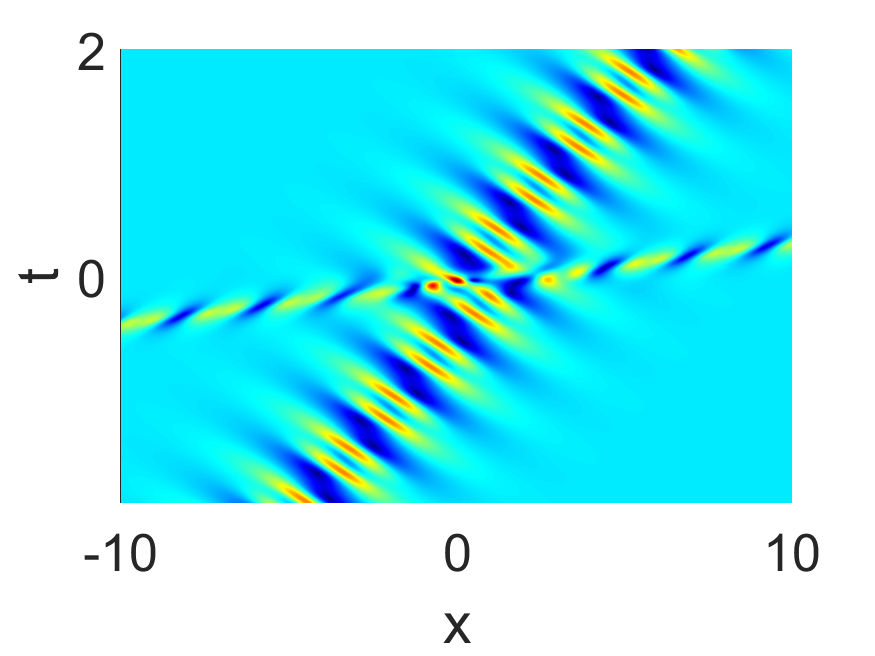}
      \label{2nd order breather-1-4}}
\subfigure[]{%
      \includegraphics[width=55mm]{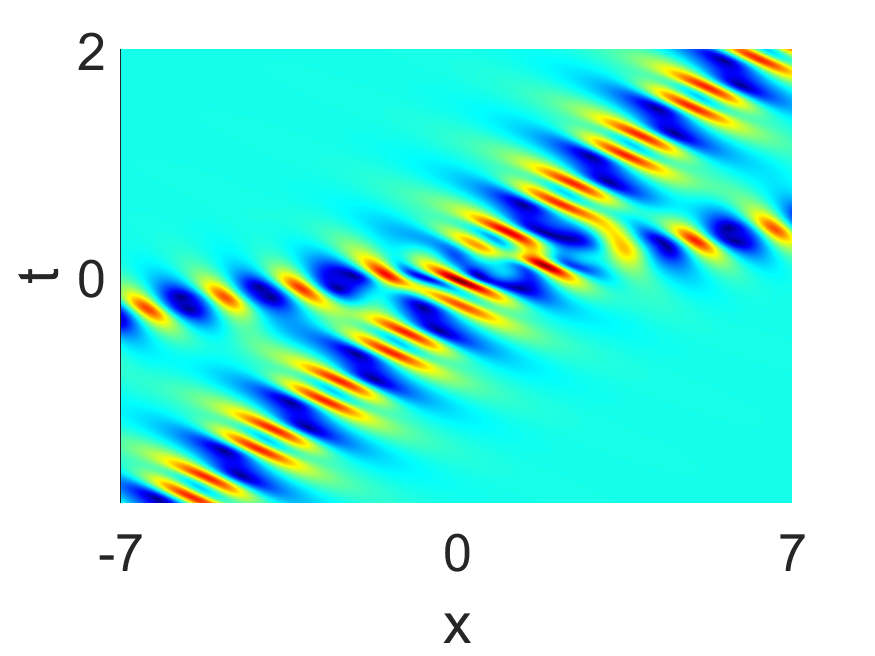}
      \label{2nd order breather-1-6}}
  \caption{(Color online) Second-order breather solutions with parameter values \(c=-1,\kappa=-1/2,\xi_{11,0}=\xi_{12,0}=0\) and (a) \(p_{11}=0.8+3.2\mathbf{i}\), \(p_{21}=0.95+1.65\mathbf{i}\),  (b) \(p_{11}=1+1.7\mathbf{i}\), \(p_{21}=-0.65+2.5\mathbf{i}\), where \(p_{12}\) and \(p_{22}\) are given by \eqref{p12}. (c)  and (d) are the corresponding density plots of (a)  and (b), respectively. }
  \label{2nd order breather-2}
\end{figure*}

\begin{figure*}[!ht]
  \centering
\subfigure[]{%
      \includegraphics[width=43mm,height=40mm]{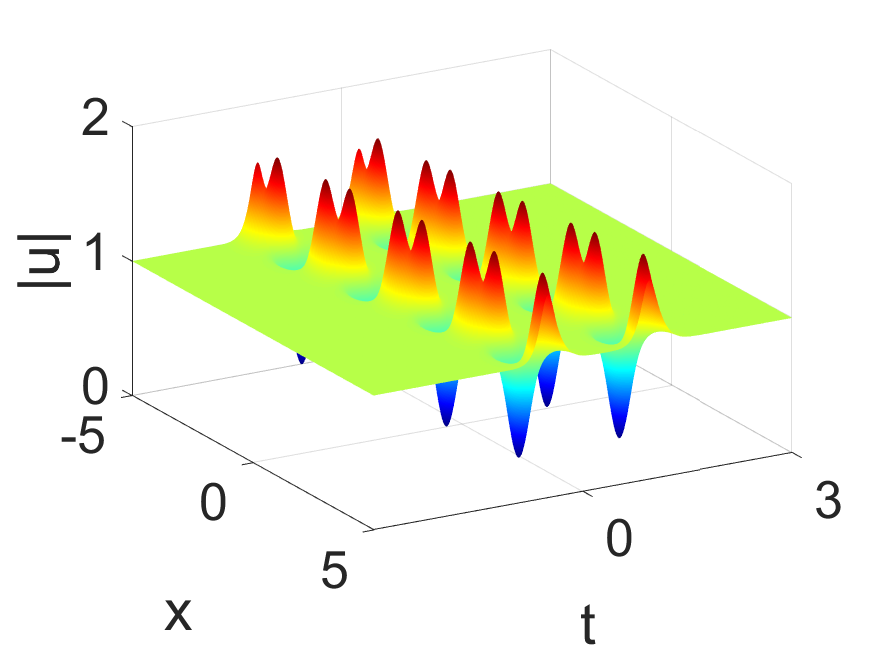}
      \label{2nd order breather-2-1}}
\subfigure[]{%
      \includegraphics[width=43mm,height=40mm]{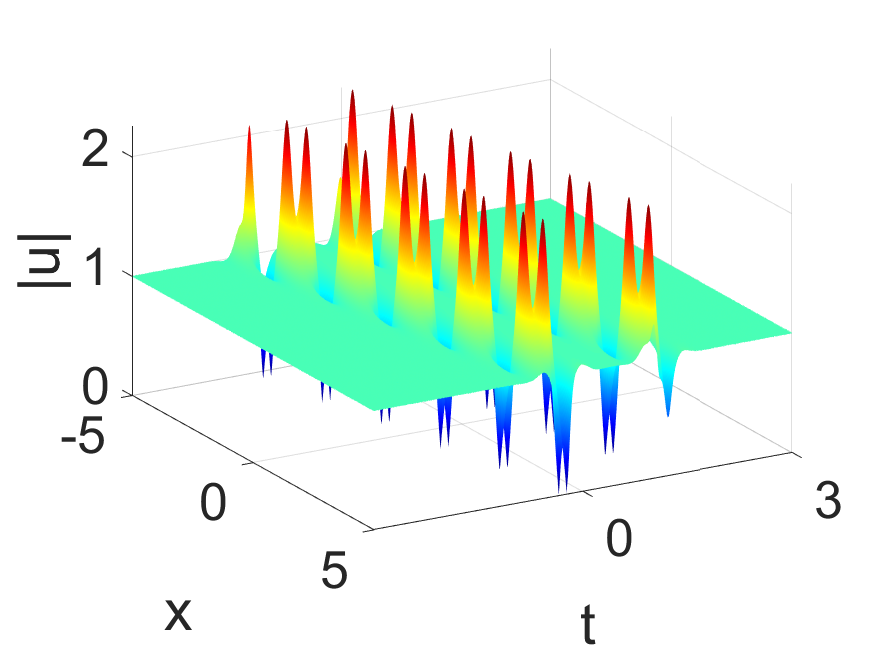}
      \label{2nd order breather-2-2}}
\subfigure[]{%
      \includegraphics[width=43mm,height=40mm]{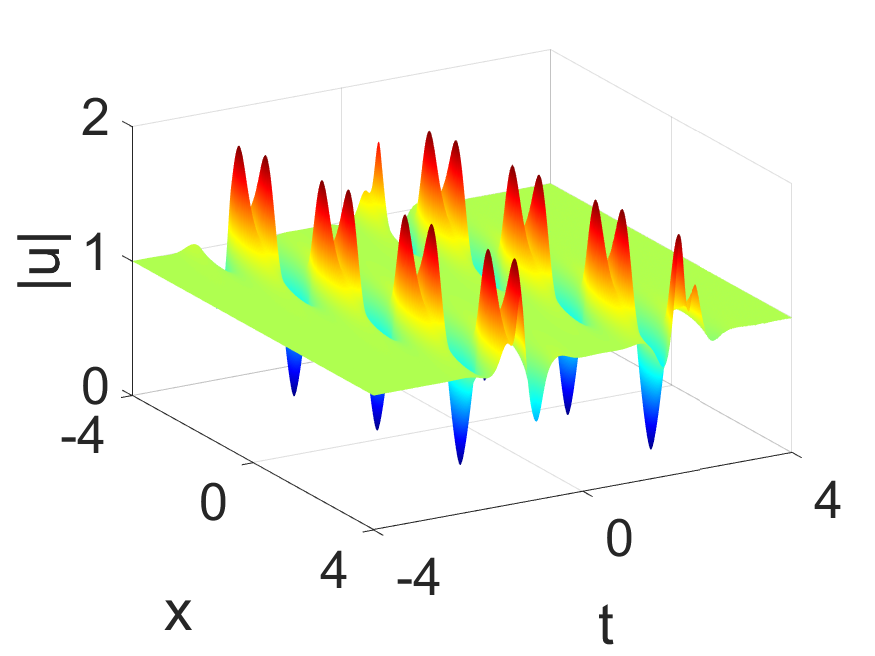}
      \label{2nd order breather-2-3}}
\subfigure[]{%
      \includegraphics[width=43mm]{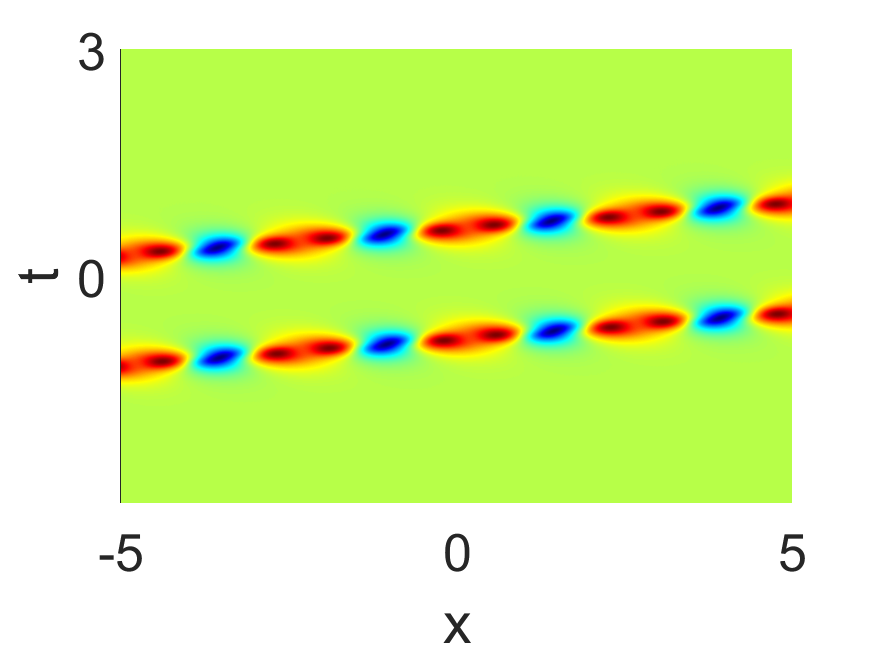}
      \label{2nd order breather-2-4}}
\subfigure[]{%
      \includegraphics[width=43mm]{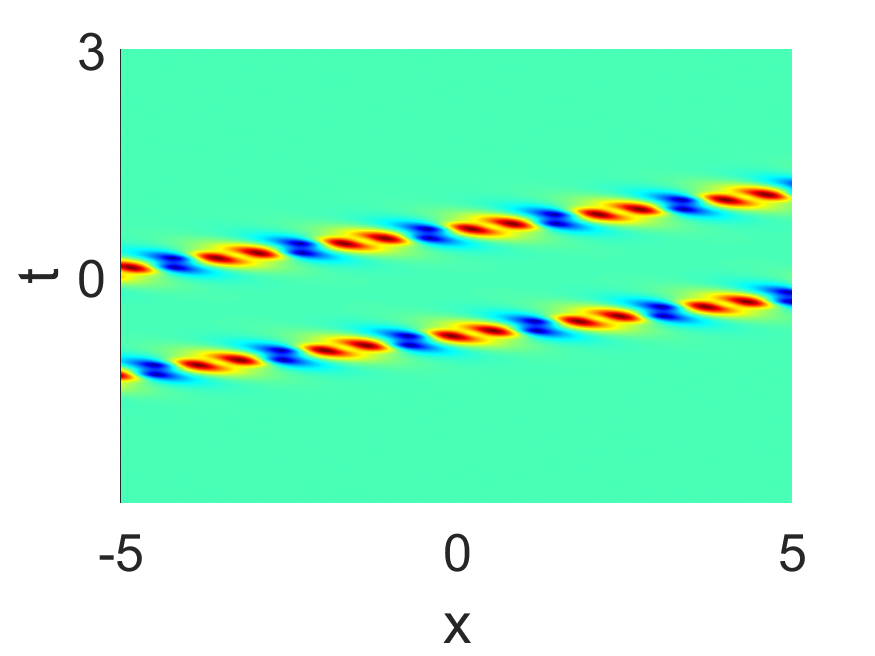}
      \label{2nd order breather-2-5}}
\subfigure[]{%
      \includegraphics[width=43mm]{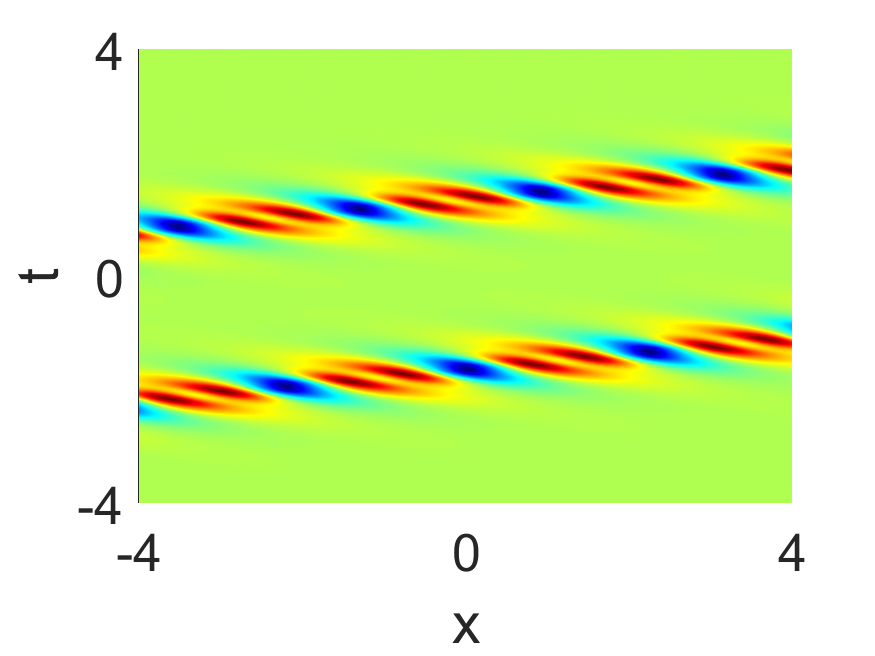}
      \label{2nd order breather-2-6}}
  \caption{(Color online) Second-order breather solutions with parameter values \(c=-1,\kappa=-1/2,\xi_{11,0}=1,\xi_{12,0}=0,\xi_{21,0}=\xi_{22,0}=0\) and (a) \(p_{11}=0.8+2.5\mathbf{i}\), \(p_{21}=0.8001+2.5\mathbf{i}\), (b) \(p_{11}=1.3+2.3\mathbf{i}\), \(p_{21}=1.3001+2.3\mathbf{i}\), (c) \(p_{11}=0.8+2\mathbf{i}\), \(p_{21}=0.8001+2\mathbf{i}\), where \(p_{12}\) and \(p_{22}\) are given by \eqref{p12}. (d), (e) and (f) are the corresponding density plots of (a), (b) and (c), respectively. }
  \label{2nd order breather-3}
\end{figure*}
Finally, we can obtain $N$th-order breather solutions to the equation \eqref{SS equation} from \eqref{SS solution_tau fucntion} by taking $N \geq 3$. In general, such solutions describe the superposition of $N$ first-order breathers. However, their explicit expressions are more complicated, so they will not be provided here. Instead, we only focus on the dynamical structures of third-order breather solutions ($N=3$), which consists of three first-order breathers. On the one hand, it is obvious that there are many more types of third-order breathers than second-order ones. On the other hand, third-order breathers exhibit more diverse collisions. As illustrated in Figure  \eqref{3rd order breather-1}, the three first-order breathers may interact with each other in pairs or collide simultaneously.
\section{Conclusion}
In summary, we have derived general breather solutions to the SSE via the KP hierarchy reduction method. These solutions are expressed in terms of Gram-type
determinants through transforming a set of bilinear equations in the KP hierarchy into the bilinear forms of the SSE. Owing to the complexity of the SSE and multiple corresponding bilinear equations in the KP hierarchy, the intermediate computations are much more involved compared with most of the integrable equations that can be solved by the same method. Furthermore, in addition to the common obstructions that appear in the KP hierarchy reduction method, i.e., the dimension reduction and the complex conjugate reduction, another obstacle that we have dealt with is the symmetry reduction.

The dynamics of breathers have been investigated. For first-order breathers, six types were found totally and some of them were shown to possess a double-hump structure.
Interestingly, transitions among these first-order breathers can be achieved by changing the value of just one free real parameter in the solutions. In addition, various configurations of second- and third-order breathers have been illustrated. In particular, elastic collisions of second-order breathers were observed.

\begin{figure*}[!ht]
  \centering
\subfigure[]{%
      \includegraphics[width=43mm]{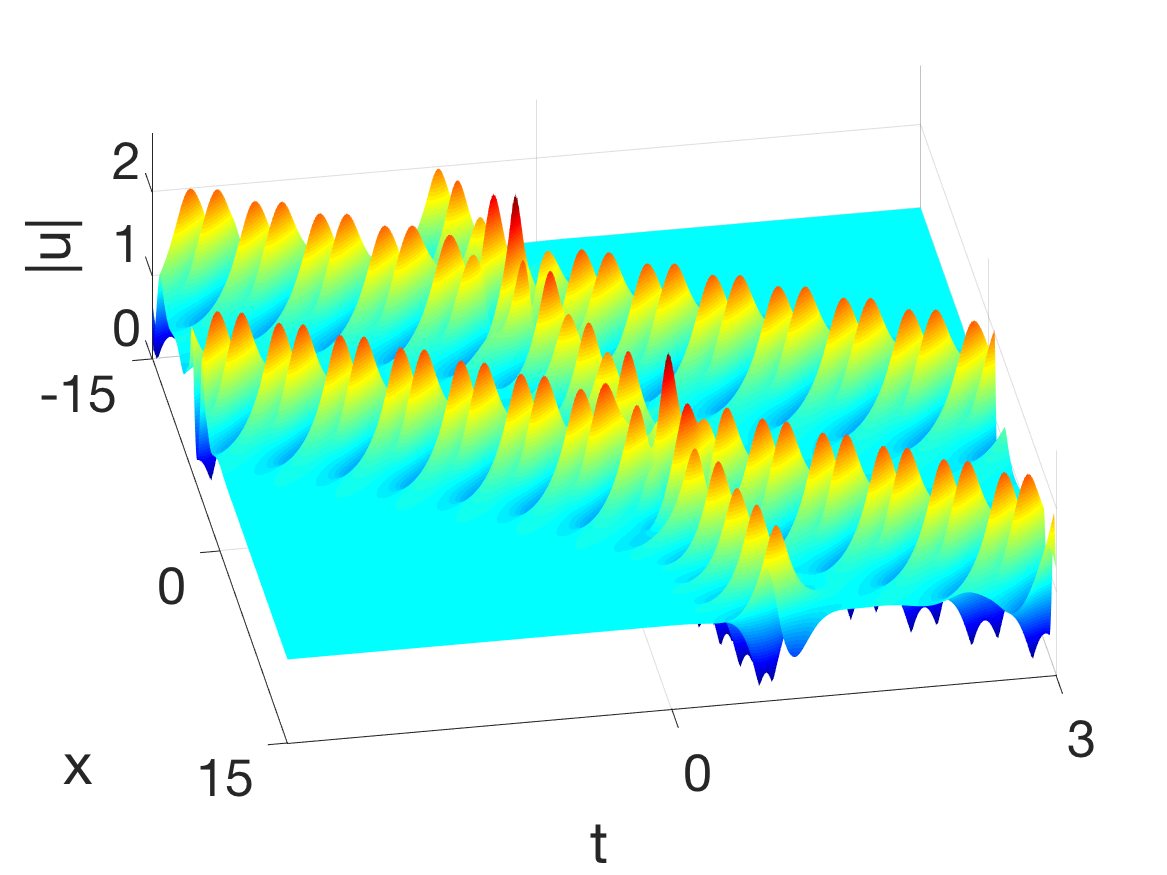}
      \label{3rd order breather-2-1}}
\subfigure[]{%
      \includegraphics[width=43mm]{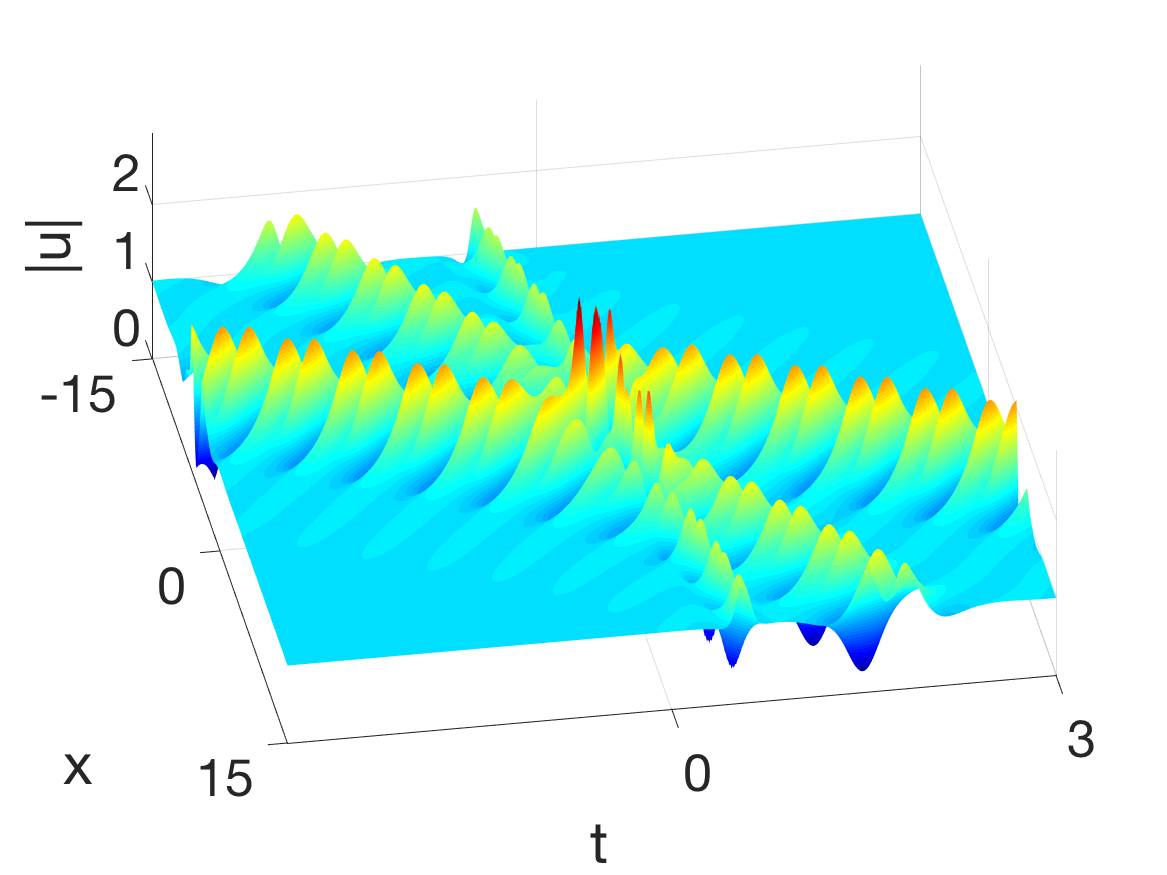}
      \label{3rd order breather-2-2}}
\subfigure[]{%
      \includegraphics[width=43mm]{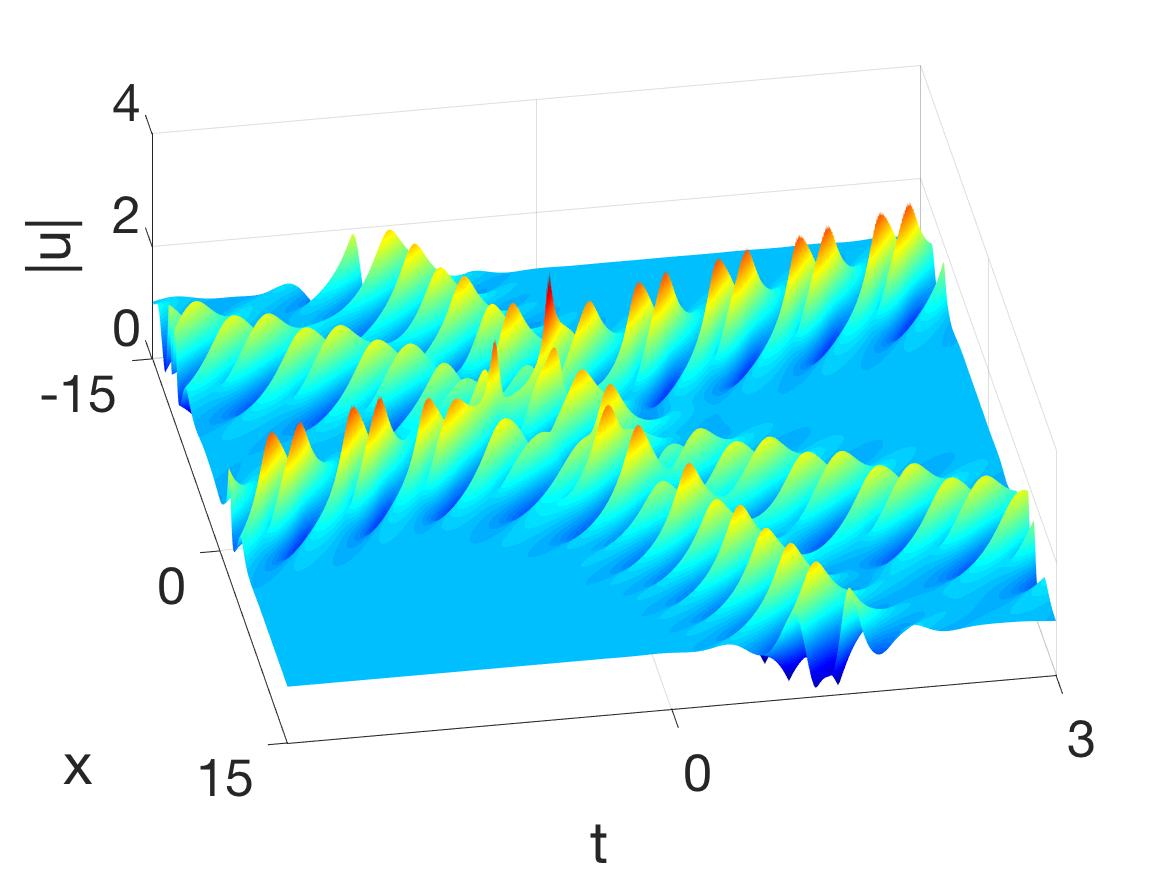}
      \label{3rd order breather-2-3}}
\subfigure[]{%
      \includegraphics[width=43mm]{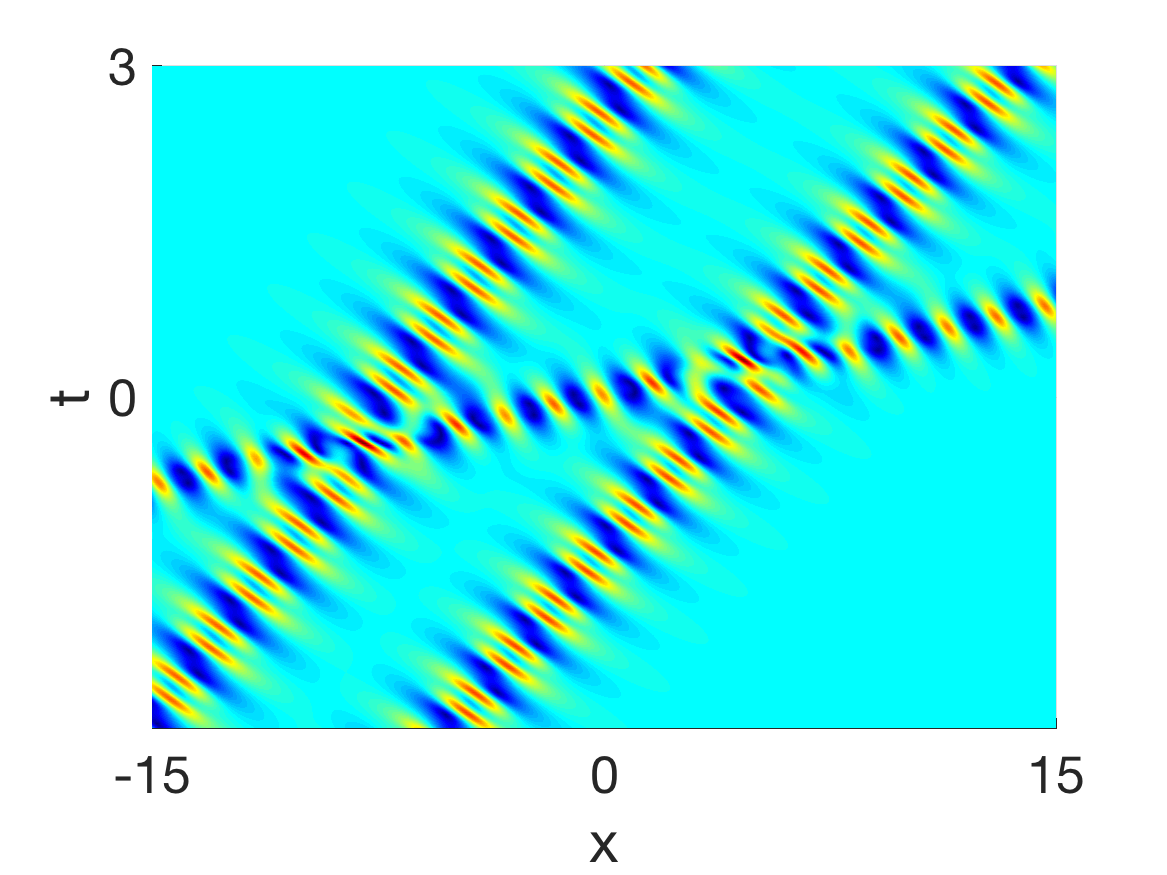}
      \label{3rd order breather-2-4}}
\subfigure[]{%
      \includegraphics[width=43mm]{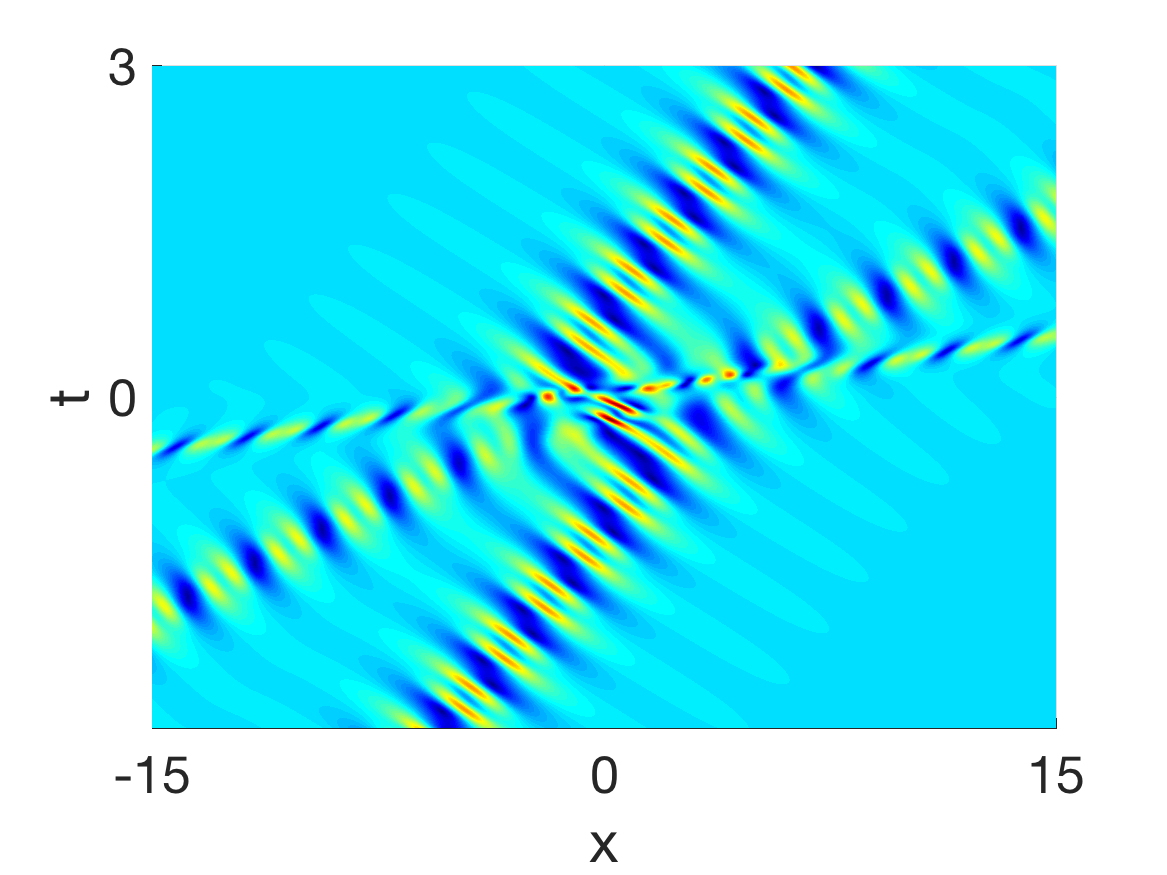}
      \label{3rd order breather-2-5}}
\subfigure[]{%
      \includegraphics[width=43mm]{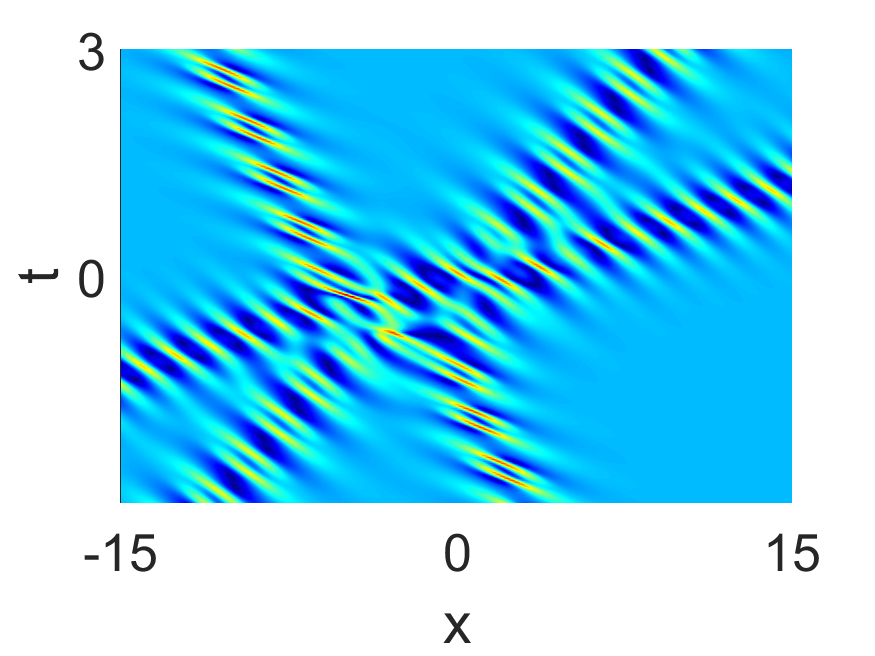}
      \label{3rd order breather-2-6}}
  \caption{(Color online) Third-order breather solutions with parameter values \(c=-1,\kappa=-1/2,\xi_{11,0}=\xi_{12,0}=0,\xi_{21,0}=\xi_{22,0}=0,\xi_{32,0}=0\) and (a) \(p_{11}= 1+1.7\mathbf{i}\), \(p_{21}= -0.65+2.5\mathbf{i}\), \(p_{31}= 0.95+1.65\mathbf{i},\) \( \xi_{31,0}= 2 \), (b) \(p_{11}= 0.95+1.65\mathbf{i}\), \(p_{21}= 0.8+2 \mathbf{i}\), \(p_{31}= 0.8+3.2\mathbf{i}, \) \( \xi_{31,0}= 2\), (c) \(p_{11}= 0.8+1.6\mathbf{i}\), \(p_{21}= -0.65+2\mathbf{i}\), \(p_{31}= 1.3+1.3\mathbf{i}, \) \( \xi_{31,0}= 4\),  where \(p_{12}\), \(p_{22}\) and \(p_{32}\) are given by \eqref{p12}. (d), (e) and (f) are the corresponding density plots of (a), (b) and (c), respectively. }
  \label{3rd order breather-1}
\end{figure*}

\vskip6pt

\enlargethispage{20pt}

\section*{Acknowledgement}
C.F. Wu was supported by the National Natural Science Foundation of China (Grant No. 11701382) and Guangdong Basic and Applied Basic Research Foundation, China (Grant No. 2021A1515010054). B.F. Feng was partially supported by National Science Foundation (NSF) under Grant No. DMS-1715991 and U.S. Department of Defense (DoD), Air Force for Scientific Research (AFOSR) under grant No. W911NF2010276.

\section*{Appendix}

In this appendix, we show that the first-order breather solutions presented in Theorem \ref{thm} can be expressed in terms of trigonometric functions and hyperbolic functions.
Let $N=1$, then Theorem \ref{thm} yields that
\begin{eqnarray}
\tau_0 &=& \left|
\begin{array}{cc}
    \sum\limits_{m=1}^{2} \sum\limits_{n=1}^{2} \frac{1}{p_{1m} + p_{1n}} e^{\xi_{1m} + \xi_{1n}} &\sum\limits_{m=1}^{2} \sum\limits_{n=1}^{2} \frac{1}{p_{1m} + p_{1n}^{*}} e^{\xi_{1m} + \xi_{1n}^{*}}  \\
    \sum\limits_{m=1}^{2} \sum\limits_{n=1}^{2} \frac{1}{p_{1m}^{*} + p_{1n}} e^{\xi_{1m}^{*} + \xi_{1n}} & \sum\limits_{m=1}^{2} \sum\limits_{n=1}^{2} \frac{1}{p_{1m}^{*} + p_{1n}^{*}} e^{\xi_{1m}^{*} + \xi_{1n}^{*}} \\
\end{array}
\right|\\
&=& |m_1|^2- m_2^2,
\end{eqnarray}
where
\begin{eqnarray*}
  m_1 = \sum\limits_{m=1}^{2} \sum\limits_{n=1}^{2} \frac{1}{p_{1m} + p_{1n}} e^{\xi_{1m} + \xi_{1n}}, \quad   m_2 = \sum\limits_{m=1}^{2} \sum\limits_{n=1}^{2} \frac{1}{p_{1m} + p_{1n}^{*}} e^{\xi_{1m} + \xi_{1n}^{*}}.
\end{eqnarray*}
Denote by $ p_{11}= A+\mathbf{i} B,  p_{12}=   R+\mathbf{i} S, \xi_{11,0} = \alpha_1 + \mathbf{i} \beta_1, \xi_{12,0} = \alpha_2 + \mathbf{i} \beta_2$, where $p_{11}$ and $p_{12}$ satisfy \eqref{parameter constraint1}, then we have
\begin{eqnarray*}
  &&\xi_{11} = X+ \mathbf{i} Y =   A x +  (A^3-3 A B^2) t   +\alpha_1  + \mathbf{i} \left[ B x+ \left(3 A^2 B-B^3\right)t +\beta_1\right] ,
  \\
  && \xi_{12} = W+\mathbf{i} V = R x + (R^3-3 R S^2) t + \alpha_2+\mathbf{i} \left[S x +  \left(3 R^2 S-S^3\right) t +\beta_2 \right]  .
\end{eqnarray*}
After some tedious algebra, we can rewrite $\tau_0$ in the form
\begin{eqnarray}
  \tau_0 
    &=& e^{2 W+2X}\left\{ c_0  + M_1\cosh (2 W-2X - \theta_1)  \right. \nonumber\\
     &&+(c_1 + c_2) \cos (V-Y) \cosh (W-X) +  (c_1 - c_2) \cos (V-Y) \sinh (W-X)  \nonumber\\
    && + (d_1 + d_2) \sin (V-Y) \cosh (W-X) +  (d_1 - d_2) \sin (V-Y) \sinh (W-X) \nonumber\\
    && \left.+ (c_3+d_3)^{1/2} \cos (2 V-2Y - \theta_2)  \right\}  \label{tau0}
\end{eqnarray}
 where

\begin{tabular}{ll}
  $ c_1  =  \dfrac{2 (R (A+R)+S (B+S))}{\left(R^2+S^2\right) K}-\dfrac{2 (A+R)}{R L }$, &
$  d_1  = \dfrac{2 (A S-B R)}{\left(R^2+S^2\right) K}-\dfrac{2 (S-B)}{RL}$, \\
$ c_2  = \dfrac{2 (A (A+R)+B (B+S))}{\left(A^2+B^2\right) K}-\dfrac{2 (A+R)}{A L }$,  &
 $  d_2  = \dfrac{2 (A S-B R)}{\left(A^2+B^2\right) K}-\dfrac{2 (S-B)}{A L}$, \\
   $c_3  = \dfrac{A R+B S}{2 \left(A^2+B^2\right) \left(R^2+S^2\right)}-\dfrac{2(A+R)^2-2(B-S)^2}{L^2}$, &
   \\
$    d_3 =  \dfrac{A S-B R}{2 \left(A^2+B^2\right) \left(R^2+S^2\right)}+\dfrac{4 (A+R) (B-S)}{L^2}$, &
\\
  $ c_0 =\dfrac{4}{K} -\dfrac{2}{L}-\dfrac{1}{2A R}$, $ \qquad K = (A+R)^2+(B+S)^2 $, & $ L=(A+R)^2+(B-S)^2$,
  \\
$  M_1 = 2 \sqrt{\sigma_1 \sigma_2}$, $ \qquad \theta_1 = \ln (\sigma_2/\sigma_1)/2 $, $  \qquad \theta_2 = \arctan (d_3/c_3) $, &
  \\
$ \sigma_1 =  \dfrac{S^2}{4 R^2 \left(R^2+S^2\right)}$, $ \qquad \sigma_2 =  \dfrac{B^2}{4 A^2 \left(A^2+B^2\right)}  $. &
\end{tabular}

Similarly, we have
\begin{eqnarray}
  \tau_1
  &=& e^{2 W+2X}\left\{ e_0  + M_2\cosh (2 W-2X - \theta_3)  \right.\nonumber\\
  &&+(e_1 + e_2) \cos (V-Y) \cosh (W-X) +  (e_1 - e_2) \cos (V-Y) \sinh (W-X) \nonumber\\
  && + (f_1 + f_2) \sin (V-Y) \cosh (W-X) +  (f_1 - f_2) \sin (V-Y) \sinh (W-X)\nonumber\\
  && + (e_3+f_3)^{1/2} \cos (2 V-2Y - \theta_4) \nonumber\\
  &&+i\left[\hat{e}_0  + \hat{M}_2\cosh (2 W-2X - \hat{\theta}_3)  \right.\nonumber\\
  &&+(\hat{e}_1 + \hat{e}_2) \cos (V-Y) \cosh (W-X) +  (\hat{e}_1 - \hat{e}_2) \cos (V-Y) \sinh (W-X) \nonumber\\
  && + (\hat{f}_1 + \hat{f}_2) \sin (V-Y) \cosh (W-X) +  (\hat{f}_1 - \hat{f}_2) \sin (V-Y) \sinh (W-X)\nonumber\\
  && \left.\left. + (\hat{e}_3+\hat{f}_3)^{1/2} \cos (2 V-2Y - \hat{\theta}_4)\right]\right\}, \label{tau1}
\end{eqnarray}
where
\begin{eqnarray*}
  e_0 &=& \Re(K_{22}+K_{23}+K_{32}+K_{33}-L_{14}-L_{22}-L_{33}-L_{41}),\\
  e_1 &=& \Re(K_{24}+K_{34}+K_{42}+K_{43}-L_{24}-L_{43}-L_{42}-L_{34}),\\
  f_1 &=& \Im(-K_{24}-K_{34}+K_{42}+K_{43}+L_{24}+L_{43}-L_{42}-L_{34}),\\
  e_2 &=& \Re(K_{12}+K_{13}+K_{21}+K_{31}-L_{13}-L_{21}-L_{31}-L_{12}),\\
  f_2 &=& \Im(-K_{12}-K_{13}+K_{21}+K_{31}+L_{13}+L_{21}-L_{31}-L_{12}),\\
  e_3 &=& \Re(K_{14}+K_{41}-L_{23}-L_{32}),\\
  f_3 &=& \Im(-K_{14}+K_{41}-L_{23}+L_{32}),\\
  \hat{e}_0 &=& \Im(K_{22}+K_{23}+K_{32}+K_{33}-L_{14}-L_{22}-L_{33}-L_{41}),\\
  \hat{e}_1 &=& \Im(K_{24}+K_{34}+K_{42}+K_{43}-L_{24}-L_{43}-L_{42}-L_{34}),\\
  \hat{f}_1 &=& \Re(K_{24}+K_{34}-K_{42}-K_{43}-L_{24}-L_{43}+L_{42}+L_{34}),\\
  \hat{e}_2 &=& \Im(K_{12}+K_{13}+K_{21}+K_{31}-L_{13}-L_{21}-L_{31}-L_{12}),\\
  \hat{f}_2 &=& \Re(K_{12}+K_{13}-K_{21}-K_{31}-L_{13}-L_{21}+L_{31}+L_{12}),\\
  \hat{e}_3 &=& \Im(K_{14}+K_{41}-L_{23}-L_{32}),\\
  \hat{f}_3 &=& \Re(K_{14}-K_{41}+L_{23}-L_{32}),\\
  M_2 &=& 2\sqrt{\Re((K_{11}-L_{11})(K_{44}-L_{44}))}, \quad  \hat{M}_2 = 2\sqrt{\Im((K_{11}-L_{11})(K_{44}-L_{44}))},\\
  \theta_3 &=& \ln \Re((K_{11}-L_{11})/(K_{44}-L_{44}))/2, \quad \hat{\theta}_3 = \ln \Im((K_{11}-L_{11})/(K_{44}-L_{44}))/2,\\
  \theta_4 &=& \arctan (f_3/e_3), \quad \hat{\theta}_4 = \arctan (\hat{f}_3/\hat{e}_3),
\end{eqnarray*}
and \(K_{kl}, L_{kl}, (k,l=1,2,3,4)\) are given by
\begin{equation}
  K_{kl} = n^{(k)}_{11} \times n^{(l)}_{22}, \quad L_{kl} = n^{(k)}_{12} \times n^{(l)}_{21},
\end{equation}
with $(i,j=1,2)$
\begin{eqnarray}
  n^{(1)}_{ij} = \frac{(\mathbf{i} \kappa -p_{i,1})}{(p_{j,1}+a) (p_{i,1}+p_{j,1})},\quad n^{(2)}_{ij} = \frac{(\mathbf{i} \kappa-p_{i,1})}{(p_{j,2}+a) (p_{i,1}+p_{j,2})},\\
  n^{(3)}_{ij} = \frac{(\mathbf{i} \kappa-p_{i,2})}{(p_{j,1}+a) (p_{i,2}+p_{j,1})},\quad n^{(4)}_{ij} = \frac{(\mathbf{i} \kappa-p_{i,2})}{(p_{j,2}+a) (p_{i,2}+p_{j,2})}.
\end{eqnarray}
As a consequence, the first-order breather solutions of the Sasa-Satsuma  equation \eqref{SS equation} can be rewritten as
$$
u=\frac{\tau_{1} (x-6 c t,t)}{\tau_{0} (x-6 c t,t)} e^{\mathrm{i}\left(\kappa(x-6 c t)-\kappa^{3} t\right)},
$$
where $\tau_{0}$ and $\tau_{1}$ are given by \eqref{tau0} and \eqref{tau1} respectively.

%


%
%
%
%
%


\begin{thebibliography}{99}

\bibitem{Ablowitz}
Ablowitz M, Kaup D, Newell A, Segur H.
1973 Method for solving the sine-Gordon equation.
{\it Phys. Rev. Lett.} {\bf 30},
1262--1264. (doi:10.1103/PhysRevLett.30.1262)

\bibitem{Akh1}
Akhmediev N, Eleonskii V, Kulagin N.
1987 First-order exact solutions of the nonlinear Schr{\"o}dinger equation.
{\it Theor. Math. Phys.} {\bf 72},
809--818. (doi:10.1007/BF01017105)

\bibitem{Akh2}
Akhmediev N, Ankiewicz A.
1997 {\it Solitons, Non-linear Pulses and Beams.}
London: Chapman \& Hall.

\bibitem{Sievers88}
Sievers A, Takeno S.
1988 Intrinsic localized modes in anharmonic crystals.
{\it Phys. Rev. Lett.} {\bf 61},
970. (doi:10.1103/PhysRevLett.61.970)

\bibitem{Page90}
Page J.
1990 Asymptotic solutions for localized vibrational modes in strongly anharmonic periodic systems.
{\it Phys. Rev. B} {\bf 41},
7835. (doi:10.1103/PhysRevB.41.7835)

\bibitem{EisenbergPRL98}
Eisenberg H, Silberberg Y, Morandotti R, Boyd A, Aitchison J.
1998 Discrete spatial optical solitons in waveguide arrays.
{\it Phys. Rev. Lett.} {\bf 81},
3383. (doi:10.1103/PhysRevLett.81.3383)

\bibitem{Sukorukov03}
Sukorukov A, Kivshar Y, Eisenberg H, Silberberg Y.
2003 Spatial optical solitons in waveguide arrays.
{\it IEEE J. Quantum Electron.} {\bf 39},
31--50. (doi:10.1109/JQE.2002.806184)

\bibitem{JosephonEx}
Trias E, Mazo J, Orlando T.
2000 Discrete breathers in nonlinear lattices: Experimental detection in a Josephson array.
{\it Phys. Rev. Lett.} {\bf 84},
741. (doi:10.1103/PhysRevLett.84.741)

\bibitem{Binder00}
Binder P, Abraimov D, Ustinov A, Flach S, Zolotaryuk Y.
2000 Observation of breathers in Josephson ladders
{\it Phys. Rev. Lett.} {\bf 84},
745. (doi:10.1103/PhysRevLett.84.745)

\bibitem{SieversPRL99}
Schwarz U, English L, Sievers A.
1999 Experimental generation and observation of intrinsic localized spin wave modes in an antiferromagnet.
{\it Phys. Rev. Lett.} {\bf 83},
223. (doi:10.1103/PhysRevLett.83.223)

\bibitem{SieversPRL03}
Sato M, Hubbard B, Sievers A, Ilic B, Czaplewski D, Craighead H.
2003 Observation of locked intrinsic localized vibrational modes in a micromechanical oscillator array.
{\it Phys. Rev. Lett.} {\bf 90},
044102. (doi:10.1103/PhysRevLett.90.044102)




\bibitem{Yuen1980Lake}
Yuen H, Lake B.
1980 Instabilities of waves on deep water.
{\it Annu. Rev. Fluid Mech.} {\bf 12},
303--334. (doi:10.1146/annurev.fl.12.010180.001511)

\bibitem{Zakharov2009Ostrovsky}
Zakharov V, Ostrovsky L.
2009 Modulation instability: the beginning.
{\it Phys. D.} {\bf 238},
540--548. (doi:10.1016/j.physd.2008.12.002)


\bibitem{Dysthe2008KrogstadMuller}
Dysthe K, Krogstad H, M{\"u}ller P.
2008 Oceanic rogue waves.
{\it Annu. Rev. Fluid Mech.} {\bf 40},
287--310. (doi:10.1146/annurev.fluid.40.111406.102203)



\bibitem{Feng2021LingZhu}
Feng B, Ling L, Zhu Z.
2021 A focusing and defocusing semi-discrete complex short-pulse equation and its various soliton solutions.
{\it Proc. R. Soc. A} {\bf 477},
20200853. (doi:10.1098/rspa.2020.0853)

\bibitem{Chowdury2017KrolikowskiAkhmediev}
Chowdury A, Krolikowski W, Akhmediev N.
2017 Breather solutions of a fourth-order nonlinear Schr{\"o}dinger equation in the degenerate, soliton, and rogue wave limits.
{\it Phys. Rev. E} {\bf 96},
042209. (doi:10.1103/PhysRevE.96.042209)

\bibitem{Benney1967Newell}
Benney D, Newell A.
1967 The propagation of nonlinear wave envelopes.
{\it J. Math. Phys.} {\bf 46},
133--139. (doi:10.1002/sapm1967461133)

\bibitem{Agrawal1995}
Agrawal, G.
1995 {\it Nonlinear fiber optics.}
New York, USA: Academic Press.

\bibitem{Zakharov1972}
Zakharov V.
1972 Collapse of Langmuir waves.
{\it Sov. Phys. JETP}  {\bf 35},
908--914.

\bibitem{Dalfovo1999GiorginiStringari}
Dalfovo F, Giorgini S, Pitaevskii L, Stringari S.
1999 Theory of Bose-Einstein condensation in trapped gases.
{\it Rev. Mod. Phys.} {\bf 71},
463. (doi:10.1103/RevModPhys.71.463)

\bibitem{Akhmediev1986Korneev}
Akhmediev N, Korneev V.
1986 Modulation instability and periodic solutions of the nonlinear Schr{\"o}dinger equation.
{\it Theor. Math. Phys.} {\bf 69},
1089--1093. (doi:10.1007/BF01037866)

\bibitem{Ohta2012Yang}
Ohta Y, Yang J.
2013 General high-order rogue waves and their dynamics in the nonlinear Schr{\"o}dinger equation.
{\it Proc. R. Soc. A} {\bf 468},
1716--1740. (doi:10.1098/rspa.2011.0640)

\bibitem{Chen2018Pelinovsky}
Chen J, Pelinovsky DE.
2018 Rogue periodic waves of the focusing nonlinear {S}chr{\"o}dinger equation.
{\it  Proc. R. Soc. A}  {\bf 474}, 
20170814. (doi:10.1098/rspa.2017.0814)

\bibitem{Kuznetsov1977}
Kuznetsov E.
1977 Solitons in a parametrically unstable plasma.
{\it Dokl. Akad. Nauk SSSR} {\bf 236},
575--577.

\bibitem{Ma1979}
Ma Y.
1979 The perturbed plane-wave solutions of the cubic Schr{\"o}dinger equation.
{\it Stud. Appl. Math.} {\bf 60},
43--58. (doi:10.1002/sapm197960143)

\bibitem{Peregrine1983}
Peregrine D.
1983 Water waves, nonlinear Schr{\"o}dinger equations and their solutions.
{\it J. Austral. Math. Soc. Ser. B} {\bf 25},
16--43. (doi:10.1017/S0334270000003891)

\bibitem{Zhang2018Yan}
Zhang G, Yan Z.
2018 The n-component nonlinear Schr{\"o}dinger equations: dark--bright mixed N-and high-order solitons and breathers, and dynamics.
{\it Proc. R. Soc. A} {\bf 474},
20170688. (doi:10.1098/rspa.2017.0688)

\bibitem{Agrawal2011}
Agrawal G.
2011 Nonlinear fiber optics: its history and recent progress.
{\it J. Opt. Soc. Am. B} {\bf 28},
A1--A10. (doi:10.1364/JOSAB.28.0000A1)

\bibitem{Porsezian1992DanielLakshmanan}
Porsezian K, Daniel M, Lakshmanan M.
1992 On the integrability aspects of the one-dimensional classical continuum isotropic biquadratic Heisenberg spin chain.
{\it J. Math. Phys.} {\bf 33},
1807--1816. (doi:10.1063/1.529658)

\bibitem{Sasa1991Satsuma}
Sasa N, Satsuma J.
1991 New-type of soliton solutions for a higher-order nonlinear Schr{\"o}dinger equation.
{\it J. Phys. Soc. Jpn.} {\bf 60},
409--417. (doi:10.1143/JPSJ.60.409)

\bibitem{Xu2013Fan}
Xu J, Fan E.
2013 The unified transform method for the Sasa--Satsuma equation on the half-line.
{\it Proc. R. Soc. A} {\bf 469},
20130068. (doi:10.1098/rspa.2013.0068)

\bibitem{Kundu1984}
Kundu A.
1984 Landau--Lifshitz and higher-order nonlinear systems gauge generated from nonlinear Schr{\"o}dinger-type equations.
{\it J. Math. Phys.} {\bf 25},
3433--3438. (doi:10.1063/1.526113)

\bibitem{Gedalin1997ScottBand}
Gedalin M, Scott T, Band Y.
1997 Optical solitary waves in the higher order nonlinear Schr{\"o}dinger equation.
{\it Phys. Rev. Lett.} {\bf 78},
448. (doi:10.1103/PhysRevLett.78.448)

\bibitem{Mihalache1993TornerMoldoveanuPanoiuTruta}
Mihalache D, Torner L, Moldoveanu F, Panoiu N, Truta N.
1993 Inverse-scattering approach to femtosecond solitons in monomode optical fibers.
{\it Phys. Rev. E} {\bf 48},
4699. (doi:10.1103/PhysRevE.48.4699)

\bibitem{Gilson2003HietarintaNimmoOhta}
Gilson C, Hietarinta J, Nimmo J, Ohta Y.
2003 Sasa-Satsuma higher-order nonlinear Schr{\"o}dinger equation and its bilinearization and multisoliton solutions.
{\it Phys. Rev. E} {\bf 68},
016614. (doi:10.1103/PhysRevE.68.016614)

\bibitem{Shi2019LiWu}
Shi X, Li J, Wu C.
2019 Dynamics of soliton solutions of the nonlocal Kundu-nonlinear Schr{\"o}dinger equation.
{\it Chaos} {\bf 29},
023120. (doi:10.1063/1.5080921)

\bibitem{Mihalache1997TrutaCrasovan}
Mihalache D, Truta N, Crasovan L-C.
1997 Painlev{\'e} analysis and bright solitary waves of the higher-order nonlinear Schr{\"o}dinger equation containing third-order dispersion and self-steepening term.
{\it Phys. Rev. E} {\bf 56},
1064. (doi:10.1103/PhysRevE.56.1064)

\bibitem{Solli2007RopersKoonathJalali}
Solli D, Ropers C, Koonath P, Jalali B.
2007 Optical rogue waves.
{\it Nature} {\bf 450},
1054--1057. (doi:10.1038/nature06402)

\bibitem{Ohta2010}
Ohta Y.
2010 Dark soliton solution of Sasa-Satsuma equation.
{\it AIP Conf. Proc.} {\bf 1212},
114--121. (doi:10.1063/1.3367022)

\bibitem{Chen2013}
Chen S.
2013 Twisted rogue-wave pairs in the Sasa-Satsuma equation.
{\it Phys. Rev. E} {\bf 88},
023202. (doi:10.1103/PhysRevE.88.023202)

\bibitem{Mu2016Qin}
Mu G, Qin Z.
2016 Dynamic patterns of high-order rogue waves for Sasa--Satsuma equation.
{\it Nonlinear Anal. Real World Appl.} {\bf 31},
179--209. (doi:10.1016/j.nonrwa.2016.01.001)

\bibitem{Akhmedieva2015Soto-CrespoDevineHoffmannc}
Akhmediev N, Soto-Crespo J, Devine N, Hoffmann N.
2015 Rogue wave spectra of the Sasa--Satsuma equation.
{\it Phys. D} {\bf 294},
37--42. (doi:10.1016/j.physd.2014.11.006)

\bibitem{Ling2016}
Ling L.
2016 The algebraic representation for high order solution of Sasa-Satsuma equation.
{\it Discrete Contin. Dyn. Syst. Ser. S} {\bf 9},
1975. (doi:10.3934/dcdss.2016081)

\bibitem{Liu2018GengXue}
Liu H, Geng X, Xue B.
2018 The Deift--Zhou steepest descent method to long-time asymptotics for the Sasa--Satsuma equation.
{\it J. Differ. Equ.} {\bf 265},
5984--6008. (doi:10.1016/j.jde.2018.07.026)

\bibitem{Chen2018FengMarunoOhta}
Chen J, Feng B, Maruno K-i, Ohta Y.
2018 The derivative {Y}ajima-{O}ikawa system: bright, dark soliton and breather solutions.
{\it Stud. Appl. Math.} {\bf 141}, 
145--185. (doi:10.1111/sapm.12216)

\bibitem{Feng2018LuoAblowitzMusslimani}
Feng B, Luo X, Ablowitz M, Musslimani Z.
2018 General soliton solution to a nonlocal nonlinear Schr{\"o}dinger equation with zero and nonzero boundary conditions.
{\it Nonlinearity} {\bf 31},
5385. (doi:10.1088/1361-6544/aae031)

\bibitem{Rao2017PorsezianHeKanna}
Rao J, Porsezian K, He J, Kanna T.
2018 Dynamics of lumps and dark--dark solitons in the multi-component long-wave--short-wave resonance interaction system.
{\it Proc. R. Soc. A} {\bf 474},
20170627. (doi:10.1098/rspa.2017.0627)


\bibitem{Chen2019ChenFengMaruno}
Chen J, Chen L, Feng B, Maruno K-i.
2019 High-order rogue waves of a long-wave--short-wave model of newell type.
{\it Phys. Rev. E}  {\bf 100},
052216. (doi:10.1103/PhysRevE.100.052216)

\bibitem{Li2020FuWu}
Li M, Fu H, Wu C.
2020 General soliton and (semi-) rational solutions to the nonlocal Mel'nikov equation on the periodic background.
{\it Stud. Appl. Math.} {\bf 145},
97--136. (doi:10.1111/sapm.12313)

\bibitem{Hirota2004}
Hirota R.
2004 {\it The {D}irect {M}ethod in {S}oliton {T}heory.}
Cambridge, UK: Cambridge University Press.

\bibitem{Jimbo1983Miwa}
Jimbo M, Miwa T.
1983 Solitons and infinite dimensional {L}ie algebras.
{\it  Publ. Res. Inst. Math. Sci.}  {\bf 19}, 
943--1001. (doi:10.2977/prims/1195182017)


\end{thebibliography}
\end{document}